\documentclass[11pt]{llncs}
\usepackage{amsmath,amssymb,a4wide,times}
\usepackage[final]{graphicx}
\usepackage{url}
\usepackage{color}
\urldef{\mailf}\path|fotakis@cs.ntua.gr|
\urldef{\mailp}\path|pkoutris@cs.washington.edu|

\newcommand{\keywords}[1]{{\fontsize{10pt}{10pt}\selectfont\par\addvspace\baselineskip\noindent\keywordname\enspace\ignorespaces#1}}


\def\insertfig#1#2#3{%
\begin{figure}[t]%
\centerline{\includegraphics[width=#1]{#2}}\caption{#3}
\end{figure}}

\def\floor#1{\mathop{\left\lfloor#1\right\rfloor}}
\def\ceil#1{\mathop{\left\lceil#1\right\rceil}}

\def\union{\cup}

\def\eps{\varepsilon}

\def\nats{\mathbb{N}}

\def\diam{\mathrm{diam}}
\def\rad{\mathrm{rad}}
\def\OPT{\mathrm{OPT}}
\def\PDC{\text{PD-SumRad}}
\def\FC{\text{Frac-SumRad}}
\def\RC{\text{Simple-SumRad}}
\def\OSRC{\text{OSRC}}
\def\OSRC{\text{OnlSumRad}}
\def\PP{\text{ParkPermit}}

\def\NP{\mathbf{NP}}
\spnewtheorem*{proofsketch}{Proof sketch}{\itshape}{\rmfamily}

\title{Online Sum-Radii Clustering%
\thanks{This work was supported by the project AlgoNow, co-financed
by the European Union (European Social Fund - ESF) and Greek national funds, through the Operational Program ``Education and Lifelong Learning'',
under the research funding program THALES.
Part of this work was done while P.~Koutris was with the School of Electrical and Computer Engineering, National Technical University of Athens, Greece.
An extended abstract of this work appeared in the Proceedings of the 37th Symposium on Mathematical Foundations of Computer Science (MFCS 2012), Branislav Rovan, Vladimiro Sassone, and Peter Widmayer (Editors), Lecture Notes in Computer Science~7464,  pp.~395-406, Springer, 2012.}}

\author{Dimitris Fotakis\inst{1} \and Paraschos Koutris\inst{2}}

\institute{%
School of Electrical and Computer Engineering,\\
National Technical University of Athens, 157 80 Athens, Greece.\\
\mailf
\and
Computer Science and Engineering, University of Washington, U.S.A.\\
\mailp}

\begin{document}
\maketitle

\begin{abstract}
In Online Sum-Radii Clustering, $n$ demand points arrive online and must be irrevocably assigned to a cluster upon arrival. The cost of each cluster is the sum of a fixed opening cost and its radius, and the objective is to minimize the total cost of the clusters opened by the algorithm.
We show that the deterministic competitive ratio of Online Sum-Radii Clustering for general metric spaces is $\Theta(\log n)$, where the upper bound follows from a primal-dual algorithm and holds for general metric spaces, and the lower bound is valid for ternary Hierarchically Well-Separated Trees (HSTs) and for the Euclidean plane. Combined with the results of (Csirik et al., MFCS 2010), this result demonstrates that the deterministic competitive ratio of Online Sum-Radii Clustering changes abruptly, from constant to logarithmic, when we move from the line to the plane.
We also show that Online Sum-Radii Clustering in metric spaces induced by
HSTs is closely related to the Parking Permit problem introduced by (Meyerson, FOCS 2005). Exploiting the relation to Parking Permit, we obtain a lower bound of $\Omega(\log\log n)$ on the randomized competitive ratio of Online Sum-Radii Clustering in tree metrics.
Moreover, we present a simple randomized $O(\log n)$-competitive algorithm, and a deterministic $O(\log\log n)$-competitive algorithm for the fractional version of the problem.
\end{abstract}

\keywords{Online Algorithms, Competitive Analysis, Sum-$k$-Radii Clustering}

\thispagestyle{empty}%
\pagenumbering{arabic}%

\pagestyle{plain}
\pagenumbering{arabic}

\section{Introduction}
\label{sec:intro}

In clustering problems, we seek a partitioning of $n$ demand points into $k$ groups, or \emph{clusters}, so that a given objective function, that depends on the distance between points in the same cluster, is minimized.
Typical examples are the $k$-Center problem, where we minimize the maximum cluster diameter, the Sum-$k$-Radii problem, where we minimize the sum of cluster radii, and the $k$-Median problem, where we minimize the total distance of points to the nearest cluster center.
These are fundamental problems in Computer Science, with many important applications, and have been extensively studied from an algorithmic viewpoint (see e.g., \cite{Schaeffer} and the references therein).

In this work, we study an online clustering problem closely related to Sum-$k$-Radii. In the online setting, the demand points arrive one-by-one and must be irrevocably assigned to a cluster upon arrival. Specifically, when a new demand point $u$ arrives, if it is not covered by an open cluster, the algorithm has to open a new cluster covering $u$ and to assign $u$ to it. Opening a new cluster means that the algorithm must irrevocably fix the center and the radius of the new cluster. We emphasize that once formed, clusters cannot be merged, split, or have their center or radius changed. The goal is to open few clusters with a small sum of radii.
However, instead of requiring that at most $k$ clusters open, which would lead to an unbounded competitive ratio, we follow \cite{CP01,CEIL13} and consider a Facility-Location-like relaxation of Sum-$k$-Radii, called \emph{Sum-Radii Clustering}.
In Sum-Radii Clustering, the cost of each cluster is the sum of a fixed opening cost and its radius, and we seek to minimize the total cost of the clusters opened by the algorithm.

In addition to clustering and data analysis, Sum-Radii Clustering has applications to the problem of base station placement for the design of wireless networks where users are scattered to various locations (see e.g., \cite{CEIL13,BCKK05,LP05}). In such problems, we place some wireless base stations and setup their communication range so that the communication demands are satisfied and the total setup and operational cost is minimized. A standard assumption is that the setup cost is proportional to the number of stations installed, and the operational cost for each station is proportional to the energy consumption, which is typically modeled by a low-degree polynomial of its range. In Sum-Radii Clustering, we study the particular variant where the operational cost has a linear dependence on the range.

\subsection{Previous Work}

In the offline setting, Sum-$k$-Radii and the closely related problem of Sum-$k$-Diameters%
\footnote{These problems are closely related in the sense that a
$c$-competitive algorithm for Sum-$k$-Radii implies a $2c$-competitive algorithm for Sum-$k$-Diameters, and vice versa.}
have been thoroughly studied.
Sum-$k$-Radii is $\NP$-hard even in metric spaces of constant doubling dimension \cite{Gib10}. Gibson et al. \cite{Gib08} proved that Sum-$k$-Radii in Euclidean spaces of constant dimension is polynomially solvable, and presented an $O(n^{\log \Delta\,\log n})$-time algorithm for Sum-$k$-Radii in general metric spaces, where $\Delta$ is the diameter \cite{Gib10}.
As for approximation algorithms, Doddi et al. \cite{Doddi} proved that it is $\NP$-hard to approximate Sum-$k$-Diameters in general metric spaces within a factor less than $2$, and gave a bicriteria algorithm that achieves a logarithmic approximation using $O(k)$ clusters.
Subsequently, Charikar and Panigraphy \cite{CP01} presented a primal-dual $(3.504+\eps)$-approximation algorithm for Sum-$k$-Radii in general metric spaces, which uses as a building block a primal-dual $3$-approximation algorithm for Sum-Radii Clustering.
Bil\'o et al. \cite{BCKK05} considered a generalization of Sum-$k$-Radii, where the cost is the sum of the $\alpha$-th power of the clusters radii, for $\alpha \geq 1$, and presented a polynomial-time approximation scheme for Euclidean spaces of constant dimension.

Charikar and Panigraphy \cite{CP01} also considered the incremental version of Sum-$k$-Radii, Similarly to the online setting, an incremental algorithm processes the demands one-by-one and assigns them to a cluster upon arrival. However, an incremental algorithm can also merge any of its clusters at any time. They presented an $O(1)$-competitive incremental algorithm for Sum-$k$-Radii that uses $O(k)$ clusters.

In the online setting, where cluster reconfiguration is not allowed, the Unit Covering and the Unit Clustering problems have received considerable attention. In both problems, the demand points arrive one-by-one and must be irrevocably assigned to unit-radius balls upon arrival, so that the number of balls used is minimized. The difference is that in Unit Covering, the center of each ball is fixed when the ball is first used, while in Unit Clustering, there is no fixed center and a ball may shift and cover more demands.
Charikar et al. \cite{CCFM97} proved an upper bound of $O(2^d d \log d)$ and a lower bound of $\Omega(\log d /\log\log\log d)$ on the deterministic competitive ratio of Unit Covering in $d$ dimensions. The results of \cite{CCFM97} imply a competitive ratio of $2$ and $4$ for Unit Covering on the line and the plane, respectively.
The Unit Clustering problem was introduced by Chan and Zarrabi-Zadeh \cite{CZ09}. The deterministic competitive ratio of Unit Clustering on the line is at most $5/3$ \cite{EL10} and no less than $8/5$ \cite{ES07}.
Unit Clustering has also been studied in $d$-dimensions with respect to the $L_\infty$ norm, where the competitive ratio is at most $\frac{5}{6} 2^d$, for any $d$, and no less than $13/6$, for $d \geq 2$ \cite{EL10}.

Departing from this line of work, Csirik at el. \cite{CEIL13} studied online Clustering to minimize the sum of the Setup Costs and the Diameters of the clusters, or CSDF in short.
Motivated by the difference between Unit Covering and Unit Clustering, they considered three models, the strict, the intermediate, and the flexible one, depending on whether the center and the radius of a new cluster are fixed at its opening time. Csirik at el. only studied CSDF on the line and proved that its deterministic competitive ratio is $1+\sqrt{2}$ for the strict and the intermediate model and $(1+\sqrt{5})/2$ for the flexible model.
Recently, Div\'eki and Imreh \cite{DI11} studied online clustering in two dimensions to minimize the sum of the setup costs and the area of the clusters. They proved that the competitive ratio of this problem lies in $(2.22, 9]$ for the strict model and in $(1.56, 7]$ for the flexible model.

\subsection{Contribution}

Following \cite{CEIL13}, it is natural and interesting to study the online clustering problem of CSDF in metric spaces more general than the line, such as trees, the Euclidean plane, and general metric spaces. In this work, we consider the closely related problem of Online Sum-Radii Clustering ($\OSRC$), and
%
%
give upper and lower bounds on its deterministic and randomized competitive ratio for general metric spaces and for the Euclidean plane. We restrict our attention to the strict model of \cite{CEIL13}, where the center and the radius of each new cluster are fixed at opening time. To justify our choice, we show that a $c$-competitive algorithm for the strict model implies an $O(c)$-competitive algorithm for the intermediate and the flexible model.

In Sections~\ref{sec:lower}~and~\ref{sec:det_alg}, we prove that the deterministic competitive ratio of $\OSRC$ for general metric spaces is $\Theta(\log n)$, where the upper bound follows from a primal-dual algorithm, and the lower bound is valid for ternary Hierarchically Well-Separated Trees (HSTs) and for the Euclidean plane. This result is particularly interesting because it demonstrates that the deterministic competitive ratio of $\OSRC$ (and of CSDF) changes abruptly, from constant to logarithmic, when we move from the line to the plane. Interestingly, this does not happen when the cost of each cluster is proportional to its area \cite{DI11}.

Another interesting finding is that $\OSRC$ in metric spaces induced by HTSs is closely related to the Parking Permit problem introduced by Meyerson \cite{Mey05}. In Parking Permit, we cover a set of driving days by choosing among $K$ permit types, each with a given cost and duration. The permit costs are concave, in the sense that the cost per day decreases with the duration. The algorithm is informed of the driving days in an online fashion, and irrevocably decides on the permits to purchase, so that all driving days are covered by a permit and the total cost is minimized. Meyerson \cite{Mey05} proved that the competitive ratio of Parking Permit is $\Theta(K)$ for deterministic and $\Theta(\log K)$ for randomized algorithms.
In Section~\ref{sec:permit}, we prove that $\OSRC$ in HSTs with $K+1$ levels is a generalization of the Parking Permit problem with $K$ permit types. Combined with the randomized lower bound of \cite{Mey05}, this implies a lower bound of $\Omega(\log\log n)$ on the randomized competitive ratio of $\OSRC$. Moreover, we show that, under some assumptions, a $c$-competitive algorithm for Parking Permit with $K$ types implies a $c$-competitive algorithm for $\OSRC$ in HSTs with $K$ levels.

We conclude, in Sections~\ref{sec:randomized}~and~\ref{sec:fractional}, with a simple randomized $O(\log n)$-competitive algorithm, and a deterministic $O(\log\log n)$-competitive algorithm for the fractional version of $\OSRC$. Both algorithms work for general metric spaces. The randomized algorithm is memoryless, in the sense that it keeps in memory only its solution, i.e., the centers and the radii of its clusters. The fractional algorithm is based on the primal-dual approach of \cite{AAABN03,AAABN04}, and generalizes the fractional algorithm of \cite{Mey05} for Parking Permit.

\subsection{Other Related Work}

$\OSRC$ is a special case of Online Set Cover \cite{AAABN03} with sets of different weight. \cite{AAABN03} presents a nearly optimal deterministic $O(\log m \log N)$-competitive algorithm, where $N$ is the number of elements and $m$ is the number of sets. Moreover, if all sets have the same weight and each element belongs to at most $d$ sets, the competitive ratio can be improved to $O(\log d \log N)$. If we cast $\OSRC$ as a special case of Online Set Cover, $N$ is the number of points in the metric space, which can be much larger than the number of demands $n$, $m = \Omega(n)$, and $d = O(\log n)$. Hence, a direct application of the algorithm of \cite{AAABN03} to $\OSRC$ does not lead to an optimal deterministic competitive ratio. This holds even if one could possibly extend the improved ratio of $O(\log d \log N)$ to the weighted set structure of $\OSRC$.

At the conceptual level, $\OSRC$ is related to the problem of Online Facility Location \cite{Mey01,Fot08}. However, the two problems exhibit a different behavior w.r.t. their competitive ratio, since the competitive ratio of Online Facility Location is $\Theta(\frac{\log n}{\log\log n})$, even on the line, for both deterministic and randomized algorithms \cite{Fot08}.

\section{Notation, Problem Definition, and Preliminaries}
\label{sec:model}

We consider a metric space $(M, d)$, where $M$ is the set of points and $d : M \times M \mapsto \nats$ is the distance function, which is non-negative, symmetric and satisfies the triangle inequality. For a set of points $M' \subseteq M$, we let $\diam(M') \equiv \max_{u, v \in M'} \{ d(u, v) \}$ denote the diameter, and let $\rad(M') \equiv \min_{u \in M'}\max_{v \in M'} \{ d(u, v)\}$ denote the radius of $M'$.

In a \emph{tree metric}, the points correspond to the nodes of an edge-weighted tree and the distances are given by the tree's shortest path metric.
For some $\alpha > 1$, a \emph{Hierarchically $\alpha$-Well-Separated Tree} ($\alpha$-HST) is a complete rooted tree with lengths on its edges that satisfies the following properties: (i) the edge length from any node to each of its children is the same, and (ii) the edge lengths along any path from the root to a leaf decrease by a factor of at least $\alpha$ on each level.
We say that an $\alpha$-HST is \emph{strict} if the distance of each leaf to its parent is $1$, and the edge lengths along any path from the root to a leaf decrease by a factor of $\alpha$ on each level. Thus, in a strict $\alpha$-HST, the distance of any node $v_k$ at level $k$ to its children is $\alpha^{k-1}$ and the distance of $v_k$ to the nearest leaf is $(\alpha^{k} - 1)/(\alpha - 1)$.
We usually identify a tree with the metric space induced by it.

A \emph{cluster} $C(p, r) \equiv \{ v : d(p, v) \leq r \}$ is determined by its center $p$ and its radius $r$, and consists of all points within a distance at most $r$ to $p$. The cost of a cluster $C(p, r)$ is the sum of its opening cost $f$ and its radius $r$.

\smallskip\noindent{\bf Sum-Radii Clustering.}
In the offline version of Sum-Radii Clustering, we are given a metric space $(M, d)$, a cluster opening cost $f$, and a set $D = \{ u_1, \ldots, u_n \}$ of demand points in $M$. The goal is to find a collection of clusters $C(p_1, r_1), \ldots, C(p_k, r_k)$ that cover all demand points in $D$ and minimize the total cost, which is
\( \sum_{i=1}^k (f + r_i) \).

\smallskip\noindent{\bf Online Sum-Radii Clustering.}
In the online setting, the demand points arrive one-by-one, in an online fashion, and must be irrevocably assigned to an open cluster upon arrival.
Formally, the input to Online Sum-Radii Clustering ($\OSRC$) consists of the cluster opening cost $f$ and a sequence $u_1, \ldots, u_n$ of (not necessarily distinct) demand points in an underlying metric space $(M, d)$.
The goal is to maintain a set of clusters of minimum total cost that cover all demand points revealed so far.

In this work, we focus on the so-called Fixed-Cluster version of $\OSRC$, where the center and the radius of each new cluster are irrevocably fixed when the cluster opens.
Thus, the online algorithm maintains a collection of clusters, which is initially empty. Upon arrival of a new demand $u_j$, if $u_j$ is not covered by an open cluster, the algorithm opens a new cluster $C(p, r)$ that includes $u_j$, and assigns $u_j$ to it. The algorithm incurs an irrevocable cost of $f+r$ for the new cluster $C(p, r)$.

\smallskip\noindent{\bf Competitive Ratio.}
We evaluate the performance of online algorithms using \emph{competitive analysis} (see e.g. \cite{BY98}). A (randomized) algorithm is $c$-competitive if for any sequence of demand points, its (expected) cost is at most $c$ times the cost of the optimal solution for the corresponding offline Sum-Radii instance. The (expected) cost of the algorithm is compared against the cost of an optimal offline algorithm that is aware of the entire demand sequence in advance and has no computational restrictions whatsoever.

\smallskip\noindent{\bf Simplified Optimal.}
The following proposition simplifies the structure of the optimal solution considered in the competitive analysis of our algorithms. It shows that, with losing a factor of $2$ in the competitive ratio, we may assume that each optimal cluster has a radius of $2^k f$, for some integer $k \geq 0$.

\begin{proposition}\label{prop:pow2}
Let $S$ be a feasible solution of an instance $\mathcal{I}$ of $\OSRC$. Then, there is a feasible solution $S'$ of $\mathcal{I}$ with a cost of at most twice the cost of $S$, where each cluster has a radius of $2^k f$, for some integer $k \geq 0$.
\end{proposition}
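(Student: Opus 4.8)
The plan is to build $S'$ from $S$ cluster by cluster, keeping every center fixed and only \emph{enlarging} each radius to the smallest power of two times $f$ that is still at least the original radius. Enlarging radii can only help coverage, so feasibility is immediate, and a short case analysis shows that each cluster's cost at most doubles.

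Concretely, for each cluster $C(p, r) \in S$ I would let $k \geq 0$ be the smallest non-negative integer with $2^k f \geq r$, and put the cluster $C(p, 2^k f)$ into $S'$. Since $2^k f \geq r$, we have $C(p, r) \subseteq C(p, 2^k f)$, so every demand point covered by a cluster of $S$ is also covered by the corresponding cluster of $S'$. As $S$ is feasible for $\mathcal{I}$, so is $S'$.

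For the cost, I would distinguish two cases. If $r \leq f$, then $k = 0$ and the new cluster costs $f + 2^0 f = 2f \leq 2(f + r)$, since $f \leq f + r$. If $r > f$, then $k \geq 1$, and by the minimality of $k$ we have $2^{k-1} f < r$, hence $2^k f < 2r$; thus the new cluster costs $f + 2^k f < f + 2r \leq 2(f + r)$. In either case, each new cluster costs at most twice the cluster it replaces, and summing over all clusters of $S$ gives $\mathrm{cost}(S') \leq 2\,\mathrm{cost}(S)$.

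The only mildly delicate point — which is essentially the whole content of the statement — is that we are forced to round radii up rather than to the nearest power of two (otherwise coverage could be lost), and in particular a cluster whose radius is already below $f$ must be rounded all the way up to $f$; the first case above shows this still costs only a factor of $2$, because such a cluster already pays the opening cost $f$. Beyond this observation there is no real obstacle, and the proof is a direct verification.
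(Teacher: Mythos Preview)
Your proof is correct and essentially identical to the paper's: both replace each cluster $C(p,r)$ by $C(p,2^k f)$ with $k$ the smallest non-negative integer making $2^k f \geq r$ (the paper writes this as $k = \max\{\lceil \log_2(r/f)\rceil, 0\}$), and both bound the cost by the same two-case analysis on whether $r$ is below or above $f$.
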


\begin{proof}
For each cluster $C(v, r)$ of $S$, the new solution $S'$ opens a cluster $C(v, 2^{k} f)$, where $k = \max\!\left\{ \ceil{\log_2 (r/f)}, 0\right\}$. Clearly, $C(v, 2^{k} f)$ covers all demand points in $C(v, r)$, and thus $S'$ is a feasible solution. As for the cost of $S'$, we next show that the cost of $C(v, 2^{k} f)$, which is $(1+2^{k})f$, is most twice the cost of $C(v, r)$, which is $f+r$.
If $r < f$, in which case $k = 0$, the cost of $C(v, 2^{k} f)$ is $2f$.
If $r \geq f$,
\( 2^{k} f \leq 2^{1+\log_2 (r/f)} f = 2r \).
Therefore, the cost of $C(v, 2^{k} f)$ is at most $f+2r \leq 2(f+r)$.
 \qed\end{proof}

\noindent{\bf Other Versions of Online Sum-Radii Clustering.}
For completeness, we discuss two seemingly less restricted versions of $\OSRC$, corresponding to the intermediate and the flexible model in \cite{CEIL13}, and show that they are essentially equivalent to the Fixed-Cluster version considered in this work.
In both versions, the demands are irrevocably assigned to a cluster upon arrival. In the Fixed-Radius version, only the radius of a new cluster is fixed when the cluster opens. The algorithms incurs an irrevocable cost of $f+r$ for each new cluster $C$ of radius $r$. Then, new demands can be assigned to $C$, provided that $\rad(C) \leq r$.
In the Flexible-Cluster version, a cluster $C$ is a set of demands with neither a fixed center nor a fixed radius. The algorithm's cost for each cluster $C$ is $f + \rad(C)$, where $\rad(C)$ may increase as new demands are added to $C$.
The Fixed-Cluster version is a restriction of the Fixed-Radius version, which, in turn, is a restriction of the Flexible-Cluster version. The following proposition shows that the competitive ratios of the three versions are within a constant factor from each other.

\begin{proposition} \label{prop:model_equiv}
A $c$-competitive algorithm for the Fixed-Radius (resp. Flexible-Cluster) version implies a $2c$-competitive (resp. $10c$-competitive) algorithm for the Fixed-Cluster version.
\end{proposition}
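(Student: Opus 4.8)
The plan is to treat a given algorithm $A$ for a less restrictive version as a black box: run it on the demand sequence and convert, online, the clusters it maintains into \emph{fixed} clusters, losing only a constant factor. Since the three versions are compared against the same offline Sum-Radii optimum $\OPT$, it suffices to build a Fixed-Cluster algorithm $B$ with $\mathrm{cost}(B)\le\alpha\,\mathrm{cost}(A)$ on every instance, for $\alpha=2$ (resp.\ $\alpha=10$); then $\mathrm{cost}(B)\le\alpha c\,\OPT$. I also have to check that $B$ is legitimate, i.e.\ that it opens a new cluster only when the arriving demand is uncovered. The single structural fact driving both reductions is that for any demand set $C$ one has $\diam(C)\le 2\rad(C)$, so every demand ever placed in a cluster lies within distance $2\rad(C)$ of the first demand $w(C)$ assigned to it; this is what lets $B$ fix a center once and for all.

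For the Fixed-Radius case I would have $B$ simulate $A$ and, whenever a demand that $A$ assigns to a cluster $C$ --- whose radius $r$ is fixed at $C$'s opening and whose first demand is $w(C)$ --- is not covered by $B$'s current clusters, open the single fixed cluster $C(w(C),2r)$. The key step is that a demand may be assigned to $C$ only while $\rad(C)\le r$, hence $\diam(C)\le 2r$ throughout, so this one cluster covers every demand $A$ ever places in $C$; consequently $B$ opens at most one fixed cluster per cluster of $A$, and only on forced arrivals. The cost charged to $C$ is $f+2r\le 2(f+r)$, giving $\mathrm{cost}(B)\le 2\,\mathrm{cost}(A)$.

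The Flexible-Cluster case is where the work is, because the radius of a simulated cluster $C$ grows as demands are added and $B$, who must commit a radius immediately, cannot foresee its final value. The plan is a dyadic ``doubling'' of scales: attach to each cluster $C$ of $A$ a ladder of candidate fixed clusters $C(w(C),2^{j}f)$, $j\ge 0$, all anchored at $w(C)$, and, whenever a demand $u$ that $A$ puts in $C$ escapes every rung currently opened for $C$ (in particular when $C$ is created), open the rung $C(w(C),2^{j}f)$ for the least $j$ with $2^{j}f\ge d(w(C),u)$ --- which covers $u$ and occurs only on forced arrivals. The main obstacle is the cost accounting: if $C$ ends with radius $\rho$, every demand in $C$ is within $\diam(C)\le 2\rho$ of $w(C)$, so the largest rung ever used is the least power-of-two multiple of $f$ that is at least $\diam(C)$, which is below $4\rho$; bounding the worst case in which rungs $0,\dots,K$ all get used by a geometric sum leaves a ``logarithmic'' additive term $(K+1)f$, and I would absorb it into $\rho$ using $K+1<3+\log_2(\rho/f)$ and $\log_2 x<x$, concluding that the cost charged to $C$ is at most $10(f+\rho)$ (and only $2f$ when $\rho$ is too small to force any rung beyond $j=0$). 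Summing over clusters gives $\mathrm{cost}(B)\le 10\,\mathrm{cost}(A)$, hence the claimed $10c$-competitive Fixed-Cluster algorithm. The constants are loose, so this bookkeeping, while the crux, has comfortable slack.
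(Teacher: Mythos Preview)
Your proposal is correct and follows essentially the same approach as the paper: in both reductions you anchor each simulated cluster at its first demand $w(C)$, open a single cluster $C(w(C),2r)$ in the Fixed-Radius case, and in the Flexible-Cluster case use the dyadic ladder $C(w(C),2^{j}f)$, opening the least rung covering the current demand. The cost accounting---$f+2r\le 2(f+r)$ for the first reduction, and the geometric sum $\sum_{j=0}^{K}(1+2^{j})f$ bounded via $2^{K}f\le 2\diam(C)\le 4\rho$ together with $\log_2 x\le x$ to absorb the $(K{+}1)f$ term into $10(f+\rho)$---matches the paper's computation line for line.
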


\begin{proof}
We first assume a $c$-competitive algorithm $A$ for the Fixed-Radius version. Based on $A$, we describe an algorithm $A'$ for the Fixed-Cluster version that simulates the behavior of $A$ and has a competitive ratio of at most $2c$.
Whenever the algorithm $A$ opens a new cluster $C$ of radius $r$ and covers a new demand $u$, the algorithm $A'$ opens a new cluster $C' = C(u, 2r)$. The cost of $C'$ is at most twice the cost of $C$. Moreover, since any subsequent demand $u'$ assigned to $C$ by $A$ is at distance at most $2r$ to $u$, the new cluster $C'$ also covers $u'$. Therefore, $A'$ covers all demands with a total cost at most twice the total cost of $A$.

Next, we assume a $c$-competitive algorithm $A$ for the Flexible-Cluster version. Based on $A$, we describe an algorithm $A'$ for the Fixed-Cluster version that simulates the behavior of $A$ and has a competitive ratio of at most $10c$.

Let $u$ be a new demand assigned to a cluster $C$ by the algorithm $A$. If $C$ is a new cluster that includes only $u$, $A'$ opens a new cluster $C(u, f)$ and assigns $u$ to it. Otherwise, let $\hat{u}$ be the demand in $C$ arrived first. If $u$ is covered by an open cluster of $A'$ centered at $\hat{u}$, $u$ is assigned to it. Otherwise, $A'$ opens a new cluster $C(\hat{u}, 2^k f)$, where $k = \ceil{\log_2 (d(\hat{u}, u)/f)}$, and assigns $u$ to it.

We compare the total cost of $A$ and $A'$ for covering the demands in cluster $C$ just after the assignment of $u$. The cost of $A$ is at least $f+\diam(C)/2$.
If $\diam(C) \leq f$, all demands in $C$ are within a distance of $f$ to $C$'s first demand $\hat{u}$, and are assigned to the cluster $C(\hat{u}, f)$ opened by $A'$ when $\hat{u}$ arrived. Thus, the total cost of $A'$ for the demands in $C$ is at most $2f$.
If $\diam(C) > f$, $A'$ covers the demands in $C$ by opening, in the worst case, a sequence of $\ell+1$ clusters $C(\hat{u}, f), C(\hat{u}, 2f), \ldots, C(\hat{u}, 2^\ell f)$, where $\ell = \ceil{\log_2 (\diam(C)/f)}$. Thus, the total cost of $A'$ for the demands in $C$ is at most
\[ \sum_{i=0}^\ell (1+2^i)f = (\ell+2^{1+\ell})f \leq 5\,\diam(C)\,, \]
where the inequality follows from $\ceil{\log_2 x} \leq x$ and $\ceil{\log_2 x} \leq 1+\log_2 x$, for all $x > 1$. \qed\end{proof}

\noindent{\bf Parking Permit.}
In Parking Permit ($\PP$), we are given a schedule of days, some of which are marked as driving days, and $K$ types of permits, where a permit of each type $k$, $k = 1, \ldots, K$, has cost $c_k$ and duration $d_k$. The goal is to purchase a set of permits of minimum total cost that cover all driving days.
In the online setting, the driving days are presented one-by-one, and the algorithm irrevocably decides on the permits to purchase based on the driving days revealed so far.

Meyerson \cite{Mey05} observed that by losing a constant factor in the competitive ratio, we can restrict our attention to instances with an interval structure and with permit costs that scale geometrically and permit durations non-decreasing with type. More specifically, in the \emph{interval version} of $\PP$, the permits have a hierarchical structure, in the sense that each permit is available over specific time intervals (e.g. a weekly permit is valid from Monday to Sunday), every day is covered by exactly one permit of each of the $k$ types, and each permit of type $k \geq 2$ has (an integer number of) $d_k / d_{k-1}$ permits of type $k-1$ embedded in it (see also Fig.~\ref{fig:equiv} for the structure of an interval instance). Moreover, for each permit type $k$, $1 < k \leq K$, $c_k \geq 2 c_{k-1}$ and $d_k \geq d_{k-1}$.
An interesting feature of the deterministic algorithm in \cite{Mey05} is that it is \emph{time-sequence-independent}, in the sense that it applies, with the same competitive ratio of $O(K)$, even if the order in which the driving days are revealed may not be their time order (e.g. the adversary may mark August~6 as a driving day, before marking May~25 as a driving day).

Meyerson \cite{Mey05} proves the following lower bounds on the competitive ratio of deterministic and randomized online algorithms for $\PP$.

\begin{theorem}[{\cite[Theorem~3.2]{Mey05}}]\label{thm:mey:lb:det}
Any deterministic online algorithm for $\PP$ has a competitive ratio of at least $\Omega(K)$.
\end{theorem}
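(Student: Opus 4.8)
The plan is to construct a family of hard instances on which any deterministic online algorithm must pay $\Omega(K)$ times the optimal cost. The natural approach, following the standard lower-bound template for online covering problems, is an adversarial argument: the adversary reveals driving days gradually, always forcing the algorithm to commit to permits that the adversary can then render wasteful by concentrating future demand into a single short interval. I would set up permit costs so that $c_k = 2^{k-1}$ (say) and durations $d_k = D^{k-1}$ for a suitably large base $D$, so that the interval structure of the interval version of $\PP$ applies and the costs scale geometrically as required.

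First I would describe the adversary. It maintains a ``live'' interval, initially the entire top-level (type-$K$) interval, and proceeds in $K$ rounds. In round $j$ (for $j = K, K-1, \dots, 1$), the current live interval is a single type-$j$ interval; the adversary places a driving day somewhere inside it and observes which permit(s) the algorithm purchases to cover that day. Since the algorithm is deterministic, its response is fixed. The adversary then chooses, among the $d_j/d_{j-1}$ type-$(j-1)$ sub-intervals of the current live interval, one that is \emph{not} fully covered by a long permit the algorithm just bought --- more precisely, one on which the algorithm has so far spent ``little'' --- and makes that sub-interval the new live interval for the next round. The key counting point is that a single type-$j$ permit (cost $2^{j-1}$) covers \emph{all} sub-intervals, but if the algorithm declines to buy the cheap long permit and instead buys shorter permits, it can have pre-covered only a bounded fraction of the sub-intervals, so a not-yet-covered one always exists provided $d_j/d_{j-1}$ is large enough.

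The second step is the cost accounting. On one hand, $\OPT$ is cheap: at the end, all the driving days placed by the adversary lie inside the final type-$1$ (or type-$0$) interval of the last round --- no, more carefully, they are nested, each inside the previous live interval --- so a single permit of the top type $K$, of cost $2^{K-1}$, covers every driving day; hence $\OPT \le 2^{K-1}$. On the other hand, in each of the $K$ rounds the algorithm is forced, by the choice of the new live interval, to have spent at least $\Omega(2^{K-1})$ (or at least a $\Theta(1)$ fraction of the round-$K$ permit cost) on permits that do not help in subsequent rounds; summing over the $K$ rounds gives total algorithm cost $\Omega(K \cdot 2^{K-1}) = \Omega(K)\cdot \OPT$. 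Here I would normalize the geometric cost scaling so that each round contributes a comparable additive amount --- the cleanest choice makes the per-round ``wasted'' cost equal across rounds.

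The main obstacle I expect is making precise the invariant that lets the adversary always find a ``cheap-for-the-algorithm'' sub-interval, and quantifying exactly how much the algorithm is forced to waste in each round. The algorithm has many options: it may buy several medium-length permits, partially tiling the live interval, rather than one long one or many short ones. One must argue by an averaging (pigeonhole) argument over the $d_j/d_{j-1}$ sub-intervals that the algorithm's spending restricted to at least one sub-interval is at most a $1/(\text{branching factor})$ fraction of its total spending in that round, while its \emph{total} spending in that round is at least enough to cover the single revealed day --- and crucially that covering that day with anything \emph{cheaper} than the type-$j$ permit forces more than a constant fraction of the type-$j$ cost to be spent on permits too short to survive into the next round. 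Choosing the branching factor $d_j/d_{j-1}$ large (polynomial in $K$ suffices, and the interval version permits this) kills the lower-order ``leakage'' terms, so the clean bound $\Omega(K)$ on the competitive ratio goes through.
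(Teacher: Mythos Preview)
First, note that the present paper does not prove this theorem at all: it is quoted verbatim from Meyerson~\cite{Mey05} as a known result, so there is no ``paper's own proof'' to compare against. I can therefore only comment on the soundness of your proposal on its own terms.

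Your proposal has a genuine gap. With one driving day revealed per round, the adversary you describe is defeated by the trivial ``lazy'' algorithm that always buys a single type-$1$ permit (cost $c_1 = 1$) for the revealed day and nothing else. After $K$ rounds this algorithm has paid exactly $K$, while the $K$ revealed days lie in $K$ distinct type-$1$ intervals, so $\OPT$ can also cover them with $K$ type-$1$ permits at cost $K$; the ratio is $1$. Your key assertion --- that ``covering that day with anything cheaper than the type-$j$ permit forces more than a constant fraction of the type-$j$ cost to be spent on permits too short to survive into the next round'' --- is simply false when only one day is revealed: covering one day costs $c_1 = 1$, not $\Theta(c_j)$. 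The pigeonhole/averaging step you sketch in the last paragraph does let the adversary find a sub-interval on which the algorithm has spent little, but it does \emph{not} force the algorithm to have spent a lot in total in that round, which is what the accounting needs.

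What Meyerson's actual argument does (and what any correct argument must do) is place \emph{enough} driving days to create a real dilemma at each level: if the algorithm keeps buying short permits, the adversary keeps issuing days in fresh sub-intervals of the same long interval until the accumulated short-permit cost exceeds a constant times $c_j$; if instead the algorithm buys a long permit of type $\geq j$, the adversary then moves all future days to a disjoint type-$j$ interval, stranding that cost. With branching factor $d_j/d_{j-1} \geq 2K$ (which is exactly the parameter choice noted in Section~\ref{sec:lower} of this paper), either branch charges the algorithm $\Omega(c_K/K)$ of wasted cost per level while $\OPT$ stays $O(c_K)$, and summing over $K$ levels yields the $\Omega(K)$ ratio. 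Your top-down skeleton is salvageable, but only if each round issues $\Theta(d_j/d_{j-1})$ days rather than one.
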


\begin{theorem}[{\cite[Theorem~4.6]{Mey05}}]\label{thm:mey:lb:rand}
Any randomized online algorithm for $\PP$ has an expected competitive ratio of at least $\Omega(\log K)$.
\end{theorem}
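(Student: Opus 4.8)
The plan is to invoke Yao's minimax principle: since the statement concerns randomized algorithms against an oblivious adversary, it suffices to exhibit a single probability distribution $\mathcal{D}$ over $\PP$ instances such that every \emph{deterministic} online algorithm has expected cost $\Omega(\log K)$ times $\mathbb{E}_{\mathcal{D}}[\OPT]$. By the reduction to the interval version quoted above I may work with a hierarchy of depth $K$, branching $d_k/d_{k-1}=2$, and strictly concave costs, say $c_k=k$; the point of strict concavity is that covering a set of driving days that spans a whole type-$k$ interval, but no larger interval, costs exactly $c_k=k$, so that one well-chosen offline purchase is always cheap relative to the depth of the hierarchy.

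I would then build $\mathcal{D}$ recursively so that it confronts the online algorithm, level by level, with a rent-or-buy dilemma while keeping the offline cost small. All randomness is fixed in advance (the adversary is oblivious): draw a ``stopping level'' $L\in\{1,\dots,K\}$ with $\Pr[L=k]\propto 1/k$ (normalizing constant $H_K=\Theta(\log K)$), a uniformly random nested chain of intervals, and a uniformly random order in which the sub-intervals below the stopping level are ``activated''. The driving days are revealed in time order so that, up to round $L$, the demand seen so far is consistent with the process halting at any level at least the current one: at each round the algorithm must already own permits covering the current frontier, yet it cannot tell whether the frontier will stop growing now or keep growing. When the process halts, all driving days lie inside one type-$L$ interval, hence $\OPT=c_L=L$ and $\mathbb{E}_{\mathcal{D}}[\OPT]=\sum_{k=1}^{K}\frac{1}{kH_K}\cdot k=\frac{K}{H_K}=\Theta(K/\log K)$.

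It then remains to show that every deterministic algorithm has expected cost $\Omega(K)$ on $\mathcal{D}$, which gives the ratio $\Omega(K)/\Theta(K/\log K)=\Omega(\log K)$. The mechanism is that after round $k$ the algorithm owns permits whose total cost is at least the cost of covering the current frontier; if it plays purely ``incrementally'' it has already spent $\Theta(k^2)$ by round $k$, and if it ever ``commits'' by buying a long permit it pays $\Omega(k)$ at that moment, an expense that is wasted unless the process halts near that level --- and, by the choice $\Pr[L=k]\propto 1/k$, halting near a prescribed level is a lower-order event. Formalizing this needs a potential-function / charging argument that tracks, round by round, the algorithm's current ``committed scale'' against the posterior distribution of $L$, and concludes that the expected cost is $\Omega(K)$ however the algorithm randomizes that scale; note the naive algorithm that buys the longest permit at once already achieves expected cost $K$, so the bound is tight for this construction. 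I expect the calibration to be the main obstacle: if the activation order were deterministic the hierarchy would be too predictable --- a look-ahead algorithm could then pre-purchase and be $O(1)$-competitive --- so the randomized activation order and the $1/k$-weighting of $L$ must be chosen together so that the per-level contributions accumulate to a harmonic $\Theta(\log K)$ rather than telescoping away; a fallback, if the direct analysis becomes unwieldy, is an approximation-preserving reduction from a problem with a known $\Theta(\log K)$ randomized lower bound, such as randomized online caching with $K$ pages.
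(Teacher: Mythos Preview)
The paper does not prove this theorem; it is quoted from \cite{Mey05} and used as a black box (together with Theorem~\ref{thm:equiv}) to obtain Corollary~\ref{cor:rand_LB}. So there is no proof here to compare against, and your proposal must stand on its own.

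Unfortunately, the proposed distribution does not yield the claimed $\Omega(K)$ expected cost for every deterministic algorithm. With branching $2$, linear costs $c_k=k$, and a single nested chain $I_1\subset I_2\subset\cdots$, consider the \emph{level-doubling} algorithm: whenever a driving day arrives outside the current permit, buy the permit of type $2m$ (where $m$ is the type of the last permit bought, starting with $m=1$) that contains the current day. Because the hierarchy is a tree, this permit is the unique type-$2m$ ancestor of the day just seen, and it coincides with $I_{2m}$; no ``random activation order'' can hide it. If the process halts at level $L$, the algorithm has bought permits of types $1,2,4,\ldots,2^{\lceil\log_2 L\rceil}$, for total cost at most $4L=4\,\OPT$. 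Hence $\mathbb{E}[\text{ALG}]=O(\mathbb{E}[L])=O(K/H_K)=O(\mathbb{E}[\OPT])$, and the ratio on your distribution is $O(1)$, not $\Omega(\log K)$. Your remark that ``the naive algorithm that buys the longest permit at once already achieves expected cost $K$, so the bound is tight'' is precisely where the argument breaks: that algorithm is not optimal for your distribution.

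The underlying issue is structural, not a matter of calibration. A single random chain carries too little entropy: once one driving day is seen, all its ancestors are determined, so the algorithm can climb the chain by geometric jumps and pay $O(\OPT)$ regardless of the cost schedule (with $c_k=2^k$ it climbs one level at a time, again paying $O(\OPT)$). Meyerson's construction has to prevent this; it uses a richer random instance in which, roughly, the algorithm faces an independent rent-versus-buy decision at \emph{each} of the $K$ scales, and the $\log K$ emerges from aggregating $K$ such decisions rather than from a single stopping time. Your fallback idea of reducing from a problem with a known $\Omega(\log K)$ randomized bound is reasonable in spirit, but you would still need to exhibit the reduction, and nothing in the proposal does so.
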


\section{Online Sum-Radii Clustering and Parking Permit}
\label{sec:permit}

In this section, we show that $\OSRC$ in tree metrics and the interval version of $\PP$ are closely related problems. Our results either are directly based on this correspondence or exploit this correspondence so that they draw ideas from $\PP$.
The correspondence is based on the fact that we can map the action of purchasing a permit type to that of opening a cluster of a specific radius and vice versa. For example, purchasing a single-day permit corresponds to opening a cluster of zero radius. Following the same logic, driving days for $\PP$ will be mapped to demands for $\OSRC$. However, since $\OSRC$ in general metric spaces has a much richer geometric structure than the one-dimensional $\PP$, we restrict $\OSRC$ to HSTs to obtain a useful correspondence between the two problems.

We start with the following theorem, which shows that $\OSRC$ in tree metrics is a generalization of the interval version of $\PP$.

\begin{theorem} \label{thm:equiv}
A $c$-competitive algorithm for Online Sum-Radii Clustering in HSTs with $K+1$ levels implies a $c$-competitive algorithm for the interval version of Parking Permit with $K$ permit types.
\end{theorem}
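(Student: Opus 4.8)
The plan is to turn any interval $\PP$ instance into an $\OSRC$ instance on an HST in which purchasing a permit and opening a cluster become interchangeable, and then to run a given $\OSRC$ algorithm as a $\PP$ algorithm. Fix an interval $\PP$ instance $\mathcal{J}$ with types $1,\dots,K$ of costs $c_1\le\dots\le c_K$ (with $c_k\ge 2c_{k-1}$) and durations $d_1\mid\dots\mid d_K$; after Meyerson's normalizations we may also assume that $d_1=1$ and that the whole schedule forms a single type-$K$ interval. I would let $T$ be the HST with $K+1$ levels whose leaves are the days of the schedule and whose level-$k$ subtrees, for $1\le k\le K$, are exactly the type-$k$ intervals; thus a level-$k$ node has $d_k/d_{k-1}$ children (setting $d_0:=1$) and the root, at level $K$, spans the whole schedule. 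I set the opening cost to $f:=c_1$ and choose the edge lengths so that the distance $H_k$ from a level-$k$ node down to a leaf of its subtree equals $c_k-c_1$ for $k\ge 2$, and equals $1$ for $k=1$ (rescaling the $c_k$ to be suitably large integers first, if needed). The hypothesis $c_k\ge 2c_{k-1}$ is used precisely here: it yields $H_k\ge 2H_{k-1}$, so the edge lengths $\lambda_k:=H_k-H_{k-1}$ strictly increase from the leaves to the root, and $T$ is a legitimate $\alpha$-HST with $\alpha:=\min_k \lambda_k/\lambda_{k-1}>1$.

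Identifying each driving day with its leaf, I would first verify that $\OPT(\mathcal{I})\le\OPT(\mathcal{J})$ for the induced $\OSRC$ instance $\mathcal{I}$: a type-$k$ permit over a type-$k$ interval $I$ is replaced by the cluster centered at the level-$k$ node of $I$ with radius $H_k$ (a radius-$0$ cluster at the single day, when $k=1$); this cluster covers exactly the days of $I$ and costs $f+H_k=c_k$, so every $\PP$ solution maps to an $\OSRC$ solution of the same cost.

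For the converse, online direction, let $\mathcal{A}$ be a $c$-competitive $\OSRC$ algorithm for HSTs. I build a $\PP$ algorithm $\mathcal{A}'$ that feeds each driving day to $\mathcal{A}$ (as a demand at the corresponding leaf) and, whenever $\mathcal{A}$ opens a cluster $C(p,r)$, buys one permit for it, chosen as follows: let $w$ be the least common ancestor in $T$ of the set of leaves contained in $C(p,r)$, and let $j$ be its level (take $j=1$ if $C(p,r)$ contains a single leaf, and buy nothing if it contains none). Since $C(p,r)$ then contains leaves in at least two child-subtrees of $w$, it contains two leaves at mutual distance $2H_j$; as both lie within $r$ of $p$, we get $2H_j\le 2r$, so the type-$j$ permit for the interval spanned by $w$'s subtree costs $c_j=f+H_j\le f+r$. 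Algorithm $\mathcal{A}'$ buys that permit (each distinct permit only once). It is feasible because every driving day lies inside some cluster of $\mathcal{A}$, and the permit bought for that cluster covers the entire interval, hence that day; and its total cost is at most the total cost of the clusters $\mathcal{A}$ opens, which is at most $c\cdot\OPT(\mathcal{I})\le c\cdot\OPT(\mathcal{J})$. Hence $\mathcal{A}'$ is $c$-competitive.

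I expect the main obstacle to be the charging step: one must argue that an \emph{arbitrary} cluster $C(p,r)$ --- whose center need not be a demand and whose radius is unconstrained --- always confines the leaves it covers to a single type-$j$ interval whose permit costs at most $f+r$, via the diameter bound $2H_j\le 2r$, and that buying a permit for the whole interval (not just the covered part) is harmless online, since a driving day later revealed inside that interval is automatically covered by $\mathcal{A}$'s cluster. A smaller but necessary point is the HST construction itself: realizing the prescribed distances $H_k$ while obeying the $\alpha$-HST condition, which is exactly what $c_k\ge 2c_{k-1}$ buys us, together with the minor normalizations ($d_1=1$, integer costs, a single top-level interval).
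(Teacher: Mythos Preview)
Your proposal is correct and follows essentially the same reduction as the paper: build an HST whose level-$k$ subtrees are the type-$k$ intervals, with edge lengths chosen so that a cluster rooted at a level-$k$ node and reaching the leaves costs exactly $c_k$, then argue that $\PP$ and $\OSRC$ solutions translate back and forth at no extra cost. The only notable difference is in the backward direction: the paper first normalizes any $\OSRC$ cluster to one centered at a node with the canonical radius (and argues this never increases cost), whereas you charge an arbitrary cluster $C(p,r)$ directly to the permit at the LCA of the leaves it covers via the diameter bound $2H_j\le 2r$. Both arguments are equally short; yours is a touch more explicit about why the charge is valid online. One small technical point: the paper strengthens Meyerson's normalization to $c_k\ge 3c_{k-1}$ precisely to get a $2$-HST cleanly, while you keep $c_k\ge 2c_{k-1}$ and rely on rescaling and the finiteness of $K$ to secure $\alpha>1$; either works, but if you want a fixed $\alpha$ independent of the instance, the $3c_{k-1}$ assumption is the simpler route.
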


\begin{proof}
Given an instance $\mathcal{I}$ of the interval version of $\PP$ with $K$ permit types, we construct an instance $\mathcal{I}'$ of $\OSRC$ in an HST with $K$ levels such that any feasible solution of $\mathcal{I}$ is mapped, in an online fashion, to a feasible solution of $\mathcal{I}'$ of equal cost, and vice versa.

Let $\mathcal{I}$ be an instance of the interval version of $\PP$ with $K$ permit types of costs $c_1, \ldots, c_K$ and durations $d_1, \ldots, d_K$. For simplicity and without loss of generality, we assume that $c_1 = 1$ and that all days are covered by the permit of type $K$. Moreover, by a slight modification of the proof of \cite[Theorem~2.1]{Mey05}, we can assume that for each $k = 2, \ldots, K$, $c_{k} \geq 3 c_{k-1}$ and $d_{k} \geq d_{k-1}$.

Given the costs and the durations of the permits, we construct a tree $T$ with appropriate edge lengths, which gives the metric space for $\mathcal{I}'$. The construction exploits the tree-like structure of the interval version (see also Fig.~\ref{fig:equiv}). Specifically, the tree $T$ has $K+1$ levels, its leaves correspond to the days of $\mathcal{I}$, and each node at level $k$, $1 \leq k \leq K$, corresponds to a permit of type $k$.

\begin{figure}[t]
\begin{minipage}{0.46\textwidth}
\includegraphics[width=\textwidth]{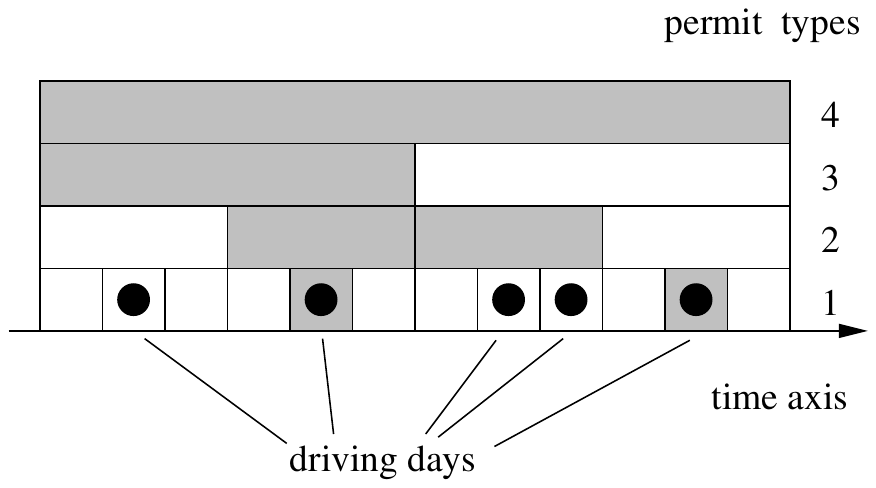}
\end{minipage}\hfill%
\begin{minipage}{0.515\textwidth}
\includegraphics[width=\textwidth]{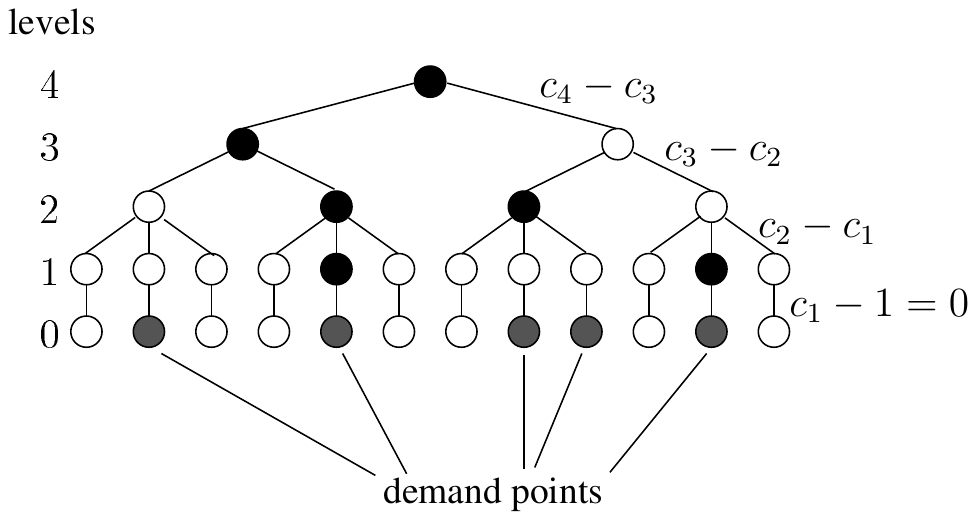}
\end{minipage}
\caption{\label{fig:equiv}An example of the reduction of Theorem~\ref{thm:equiv}. On the left, there is an instance of the interval version of $\PP$. A feasible solution consists of the permits in grey. On the right, we depict the instance of $\OSRC$ constructed in the proof of Theorem~\ref{thm:equiv}. The $\PP$ solution on left is mapped to a solution with clusters centered at each black node. The radius of each cluster is equal to the distance of its center to the nearest leaf.}
\end{figure}

Formally, the tree $T$ has a leaf, at level $0$, for each day in the schedule of $\mathcal{I}$.
For each interval $D_1$ of $d_1$ days covered by a permit of type $1$, there is a level-$1$ node $v_{1}$ in $T$ whose children are the $d_1$ leaves corresponding to the days in $D_1$. The distance of each level-$1$ node to its children is $c_1 - 1 = 0$. Hence, opening a cluster $C(v_{1}, c_1-1)$ covers all nodes in the subtree rooted at $v_{1}$ (and thus all leaves corresponding to the days in $D_1$).
Similarly, for each interval $D_k$ of $d_k$ days covered by a permit of type $k$, $2 \leq k \leq K$, there is a node $v_{k}$ at level $k$ in $T$ whose children are the $d_k/d_{k-1}$ nodes at level $k-1$ corresponding to the permits of type $k-1$ embedded within the particular permit of type $k$. The distance of each level-$k$ node to its children is $c_k - c_{k-1}$. Therefore, opening a cluster $C(v_{k}, c_k-1)$ covers all nodes in the subtree rooted at $v_{k}$ (and thus
all leaves corresponding to the days in $D_k$).
This concludes the construction of the tree $T$ that defines the metric space for $\mathcal{I}'$.
We note that $T$ is a $2$-HST, because the distance of each level-$k$ node, $1 \leq k \leq K$, to its children is $c_k \geq c_k - c_{k-1} \geq 2c_{k-1}$.

The cluster opening cost in instance $\mathcal{I}'$ is $f = 1 (= c_1)$. As for the demand sequence of $\mathcal{I}'$, for each driving day $t$ in $\mathcal{I}$, there is, in $\mathcal{I}'$, a demand located at the leaf of $T$ corresponding to $t$.

Based on the correspondence between a type-$k$ permit and a cluster $C(v_k, c_k-1)$ rooted at a level-$k$ node $v_k$, we next show that any solution of $\mathcal{I}$ is mapped, in an online fashion, to a solution of $\mathcal{I}'$ of equal cost, and vice versa.
We first describe an online mapping of any feasible solution of $\mathcal{I}$ to a feasible solution of $\mathcal{I}'$ of equal cost. By the construction of $T$, a permit of type $k$ that covers the driving days in an interval $D_k$ in $\mathcal{I}$ corresponds to a node $v_{k}$ at level $k$ of $T$, in the sense that opening a cluster $C(v_{k}, c_k-1)$ covers all demands corresponding to the driving days in $D_k$. Moreover, the cost of $C(v_{D_k}, c_k-1)$ is $c_k$, i.e., equal to the cost of the corresponding permit. Therefore, opening the clusters corresponding to the permits bought by a feasible solution of $\mathcal{I}$ gives a feasible solution of $\mathcal{I}'$ of equal cost.

For the converse mapping, we assume that in any feasible solution of $\mathcal{I}'$, all clusters are centered at nodes at levels $1, \ldots, K$ of $T$ and that any cluster centered at a level-$k$ node $v_k$ has radius $c_k-1$.
This assumption is essentially without loss of generality, since any feasible solution without this property can be translated into a feasible solution of no greater cost that satisfies this property.
Indeed, let $C_k = C(v_k, r)$ be any cluster rooted at $v_k$. If $v_k$ is a leaf, we can root $C_k$ at the ancestor of $v_k$ (recall that a leaf and its ancestor are at distance $0$ to each other). If $r < c_k-1$, $C_k$ does not cover any leaves, and can be safely removed from the solution. Otherwise, if for some level $j \geq k$, $r \in [c_j - 1, c_{j+1}-1)$, we can replace $C_k$ by a new cluster which is rooted at the level-$j$ ancestor of $v_k$ and has a radius of $c_j-1$. In all cases, the new cluster covers all demand covered by $C_k$ at no greater cost.

In such a solution, each cluster $C(v_k, c_k-1)$ costs $c_k$ and, by the construction of $T$, corresponds to a parking permit of type $k$ that covers all the driving days corresponding to the demand points in the subtree rooted at $v_k$. Therefore, buying the parking permits corresponding to the clusters opened by a feasible solution of $\mathcal{I}'$ gives a feasible solution of $\mathcal{I}$ of equal cost.
\qed\end{proof}

In the proof of Theorem~\ref{thm:equiv}, if the $\PP$ instance has $d_1 = 1$ and $c_k = 2^k$, for each type $k$, the tree $T$ is essentially a strict $2$-HST with $K$ levels where all nodes at the same level $k$ have $d_k/d_{k-1}$ children. Thus, combined with Theorem~\ref{thm:equiv}, the following lemma shows that $\OSRC$ in strict HSTs is closely related to the interval version of $\PP$.

\begin{lemma} \label{lem:equiv_rev}
A $c$-competitive time-sequence-independent algorithm for the interval version of Parking Permit with $K$ permit types implies a $c$-competitive algorithm for Online Sum-Radii Clustering in strict HSTs with $K$ levels, where all nodes at the same level have the same number of children and all demands are located at the leaves.
\end{lemma}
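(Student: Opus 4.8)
The plan is to run the construction in the proof of Theorem~\ref{thm:equiv} in reverse. Given an arbitrary instance of $\OSRC$ on a strict HST $T$ with $K$ levels, uniform branching at every level, and all demands on the leaves, I would manufacture an instance of the interval version of $\PP$ with at most $K$ permit types, relay the online demands --- recoded as driving days --- to the assumed $c$-competitive time-sequence-independent $\PP$ algorithm, and translate the permits it purchases back into clusters. Time-sequence-independence is exactly what is needed, because the $\OSRC$ demands, hence the driving days they become, may be revealed in an order unrelated to the left-to-right order of the leaves of $T$.

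First I would normalize the clusters. As in the converse mapping in the proof of Theorem~\ref{thm:equiv}, one shows that, online and at no increase in cost, every cluster opened may be taken to be \emph{canonical}: centered at a node $v$ at some level $k$ with radius $\rho_k$ equal to the common distance from $v$ to the leaves of its subtree, so that it covers exactly that subtree among the leaves. A cluster covering no leaf is dropped; a cluster covering a single leaf is recentered at that leaf with radius $0$; otherwise, letting $j$ be the least level with all leaves of $C(v,r)$ inside the subtree of the level-$j$ ancestor $a_j$ of $v$, strictness forces $r\ge\rho_j$ while $C(a_j,\rho_j)$ already covers those leaves, so $C(v,r)$ is replaced by $C(a_j,\rho_j)$ at no greater cost.

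Next I would set up the $\PP$ instance. Order the leaves of $T$ so that each subtree is a contiguous block of days, and let the days be these leaves. For $k=1,\dots,K$ introduce a permit type $k$ of duration $d_k$ equal to the number of leaves below a level-$(k-1)$ node (so $d_1=1$) and cost $c_k=f+\rho_{k-1}$ (with $\rho_0=0$, so $c_1=f$), letting the canonical cluster $C(v,\rho_{k-1})$ play the role of a type-$k$ permit over the block of days under $v$; each demand on a leaf becomes a driving day on the matching day. Uniform branching makes every type-$k$ block split ($k\ge2$) into the same integer number of type-$(k-1)$ blocks, and the durations are non-decreasing, so this is an interval instance. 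The one load-bearing point is the scaling $c_k\ge 2c_{k-1}$: using $\rho_{k-1}=\alpha\rho_{k-2}+1$ for the separation factor $\alpha$ of $T$, this reduces to $(\alpha-2)\rho_{k-2}+1-f\ge0$, which holds once $\alpha\ge2$ and $f\le1$ --- the normalization under which we work. (For $f>1$ one keeps only the $O(1)$-spaced subsequence of levels along which cluster costs geometrically double; this is still an interval instance with at most $K$ types, at the cost of a constant factor.)

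It remains to close the loop. The permit-to-cluster map and, via the canonical form, the cluster-to-permit map each preserve feasibility and cost exactly, so an optimal $\PP$ solution costs no more than an optimal (canonicalized) $\OSRC$ solution; and the $\OSRC$ algorithm that simulates the $c$-competitive $\PP$ algorithm on the recoded demands and, whenever it buys a permit, opens the matching cluster, maintains a feasible $\OSRC$ solution of cost exactly that of the $\PP$ run, hence is $c$-competitive. I expect the verification of $c_k\ge 2c_{k-1}$ --- equivalently, pinning down the normalization under which a strict HST meets the interval version of $\PP$ --- to be the only real obstacle; the rest is the bookkeeping of translating between the tree and the timeline, mirroring the proof of Theorem~\ref{thm:equiv}.
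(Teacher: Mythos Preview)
Your approach is essentially the paper's: the same reverse reduction (leaves to days, canonical clusters rooted at level-$k$ nodes to type-$k$ permits), the same canonicalization step, and the same use of time-sequence-independence to absorb out-of-order demands. The paper simply normalizes $f=1$ and does not dwell on the cost-doubling condition; with $f=1$ one gets $c_k-2c_{k-1}=(\alpha-2)(\alpha^{k-1}-1)/(\alpha-1)\ge 0$ for every $\alpha\ge 2$, so your level-subsampling workaround is unnecessary (and would in any case weaken the conclusion from $c$-competitive to $O(c)$-competitive).
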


\begin{proof}
At the intuitive level, the proof applies the reverse reduction of that in the proof of Theorem~\ref{thm:equiv}. More specifically, given an instance $\mathcal{I}$ of $\OSRC$ in a strict HST with $K$ levels, we construct an instance $\mathcal{I}'$ of the interval version of $\PP$ with $K$ permit types, such that any solution of $\mathcal{I}$ is mapped, in an online fashion, to a solution of $\mathcal{I}'$ of equal cost, and vice versa.

Let $\mathcal{I}$ be an instance of $\OSRC$ in a strict $\alpha$-HST $T$ with $K$ levels, where all nodes at level $k$, $1 \leq k \leq K-1$, have the same number $n_k$ of children, and all demands are located at the leaves of $T$. For simplicity and without loss of generality, we assume that the cluster opening cost is $f = 1$.
The permit structure of $\mathcal{I}'$ essentially reflects the hierarchical structure of $T$.
Specifically, there is a day in the schedule of $\mathcal{I}'$ corresponding to each leaf of $T$. For each leaf $v_0$, there is a permit of type $0$ with cost $c_0 = 1 (= f)$ and duration $d_0 = 1$. This permit covers the day corresponding to $v_0$ and is equivalent to a cluster $C(v_0, 0)$ of cost $1$. Similarly, for each node $v_k$ at level $k$ of $T$, $1 \leq k \leq K-1$, there is a permit of type $k$ with cost $c_k = (\alpha^{k}+\alpha-2)/(\alpha-1)$ and duration $d_k = \prod_{j=1}^k n_j$. This permit covers the days corresponding to the leaves of the subtree rooted at $v_k$ and is equivalent to a cluster $C(v_k, (\alpha^{k}-1)/(\alpha-1))$ of cost equal to $c_k$. The permits of type $k-1$ corresponding to the children of $v_k$ in $T$ are embedded in the permit of type $k$ corresponding to $v_k$, in the sense that the intervals covered by the former permits form a partition of the interval covered by the latter.
As for the demand sequence of $\mathcal{I}'$, for each demand of $\mathcal{I}$ located at a leaf $v_0$ of $T$, the day corresponding to $v_0$ in $\mathcal{I}'$ is marked as a driving day%
\footnote{We highlight that the leaves of $T$ can appear in the demand sequence of $\mathcal{I}$ in any order. Thus, we require that the $\PP$ algorithm be time-sequence-independent, i.e., it can handle driving requests that arrive out of the time order.}.

Next, we describe an online mapping of any feasible solution of $\mathcal{I}$ to a feasible solution of $\mathcal{I}'$ of equal cost. Similarly to the proof of Theorem~\ref{thm:equiv}, we assume, without loss of generality, that in any feasible solution of $\mathcal{I}$, any cluster centered at a level-$k$ node $v_k$ has a radius of $(\alpha^{k}-1)/(\alpha-1)$.
Then, each cluster $C(v_k, (\alpha^{k}-1)/(\alpha-1))$ costs $c_k$, and corresponds to a permit of type $k$ that covers all driving days corresponding to leaves of the subtree rooted at $v_k$. Therefore, purchasing the permits corresponding to the clusters of a feasible solution of $\mathcal{I}'$ gives a feasible solution of $\mathcal{I}$ of equal cost.

For the converse mapping, we observe that a permit of type $k$ that covers the driving days in an interval $D_k$ corresponds to a level-$k$ node $v_{k}$ of $T$, in the sense that opening a cluster $C(v_{k}, c_k-1)$, of cost $c_k$, covers all demand points corresponding to the driving days in $D_k$. Therefore, opening the clusters corresponding to the permits purchased by a feasible solution of $\mathcal{I}$ gives a feasible solution of $\mathcal{I}'$ of equal cost.
\qed\end{proof}

\section{Lower Bounds on the Competitive Ratio of Online Sum-Radii Clustering}
\label{sec:lower}

By Theorem~\ref{thm:equiv}, $\OSRC$ in trees with $K+1$ levels is a generalization of $\PP$ with $K$ permit types. Therefore, the results of \cite{Mey05}, and in particular Theorem~\ref{thm:mey:lb:det} and Theorem~\ref{thm:mey:lb:rand}, imply a lower bound of $\Omega(K)$ (resp. $\Omega(\log K)$) on the deterministic (resp. randomized) competitive ratio of $\OSRC$ in trees with $K$ levels.
However, a lower bound on the competitive ratio of $\OSRC$ would rather be expressed in terms of the number of demands $n$, because there is no simple and natural way of defining the number of ``levels'' of a general metric space, and because for online clustering problems, the competitive ratio, if not constant, is typically stated as a function of $n$.

Going through the proofs of Theorem~\ref{thm:equiv} and of Theorem~\ref{thm:mey:lb:det} and Theorem~\ref{thm:mey:lb:rand} from \cite{Mey05}, we can translate the lower bounds on the competitive ratio of $\PP$, expressed as a function of $K$, into equivalent lower lower bounds for $\OSRC$, expressed as a function of $n$.
In fact, the proofs of Theorem~\ref{thm:mey:lb:det} and Theorem~\ref{thm:mey:lb:rand} require that the ratio $d_k/d_{k-1}$ of the number of days covered by permits of type $k$ and $k-1$ is $2K$. Thus, in the proof of Theorem~\ref{thm:equiv}, the tree $T$ has $(2K)^K$ leaves, and the number of demands $n$ is at most $(2K)^K$. Combining this with the lower bound of $\Omega(\log K)$ on the randomized competitive ratio of $\PP$ (Theorem~\ref{thm:mey:lb:rand}), we obtain the following corollary:

\begin{corollary} \label{cor:rand_LB}
The competitive ratio of any randomized algorithm for Online Sum-Radii Clustering in tree metrics is $\Omega(\log\log n)$, where $n$ is the number of demands.
\end{corollary}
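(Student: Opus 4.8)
The plan is to invoke Theorem~\ref{thm:equiv} to transfer Meyerson's randomized lower bound (Theorem~\ref{thm:mey:lb:rand}) from $\PP$ with $K$ permit types to $\OSRC$ on an HST, and then to re-express that bound, which is naturally a function of $K$, as a function of the number of demands $n$.

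First I would fix $K$ and appeal to the hard instances underlying the proof of Theorem~\ref{thm:mey:lb:rand}. As noted in the discussion preceding the corollary, these may be taken to be interval instances of $\PP$ with $K$ types in which every type-$k$ permit embeds exactly $2K$ permits of type $k-1$; consequently the schedule has $d_K = (2K)^K$ days, and the $\OSRC$ instance produced by the construction in the proof of Theorem~\ref{thm:equiv} lives on an HST with $K+1$ levels in which every internal node has $2K$ children, and has $n \le (2K)^K$ demands, all located at leaves.

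Next, suppose $A$ is a randomized $c(n)$-competitive algorithm for $\OSRC$ in tree metrics; without loss of generality $c$ is non-decreasing (else replace $c$ by $n \mapsto \max_{n' \le n} c(n')$). By Theorem~\ref{thm:equiv} — whose reduction maps feasible solutions to feasible solutions of equal cost in both directions, hence leaves the optimum, and so the expected competitive ratio, unchanged — the algorithm $A$ yields a randomized online algorithm for the interval version of $\PP$ with $K$ types whose expected competitive ratio on the instances above is at most $c((2K)^K)$. Theorem~\ref{thm:mey:lb:rand} then forces $c((2K)^K) = \Omega(\log K)$.

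Finally I would change variables. Setting $n = (2K)^K$ gives $\log n = K\log(2K) = \Theta(K\log K)$, hence $\log\log n = \Theta(\log K)$, so $c(n) = \Omega(\log K) = \Omega(\log\log n)$ for every $n$ of the form $(2K)^K$; monotonicity of $c$ then extends the bound to all $n$. The only delicate point is this last estimate — in particular checking that $\log\log\bigl((2K)^K\bigr)$ is $\Theta(\log K)$ and not asymptotically smaller — but it is a routine calculation, so no genuine obstacle arises: all the substance is carried by the already-established Theorem~\ref{thm:equiv} and Theorem~\ref{thm:mey:lb:rand}.
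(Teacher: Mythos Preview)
Your proposal is correct and follows essentially the same route as the paper: the paper's justification for the corollary is precisely the paragraph preceding it, which combines Theorem~\ref{thm:equiv} with the observation that Meyerson's hard instances have $d_k/d_{k-1}=2K$, so the resulting HST has $(2K)^K$ leaves and $n\le(2K)^K$, whence the $\Omega(\log K)$ bound becomes $\Omega(\log\log n)$. You add a bit more care than the paper does (the monotonicity argument and the explicit verification that $\log\log\bigl((2K)^K\bigr)=\Theta(\log K)$), but the substance is identical.
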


\subsection{A Stronger Lower Bound on the Deterministic Competitive Ratio}

This approach gives a lower bound of $\Omega(\frac{\log n}{\log\log n})$ on the deterministic competitive ratio of Online Sum-Radii Clustering. Using a strict ternary HST instead, we next obtain a stronger lower bound.

\begin{theorem} \label{thm:det_LB}
The competitive ratio of any deterministic online algorithm for Online Sum-Radii Clustering in tree metrics is $\Omega(\log n)$, where $n$ is the number of demands.
\end{theorem}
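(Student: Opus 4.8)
The plan is to construct an adversarial demand sequence on a strict ternary HST and use a potential/charging argument against an arbitrary deterministic online algorithm $A$. The key gain over the $\PP$-based $\Omega(\log n / \log\log n)$ bound is that, by working directly in the geometry of a ternary tree, each level can have branching factor $3$ rather than $2K$, so an HST with $K$ levels has only $3^K$ leaves instead of $(2K)^K$; hence $K = \Omega(\log n)$ and a lower bound of $\Omega(K)$ on the competitive ratio translates to $\Omega(\log n)$. I would take $f = 1$ and, invoking Proposition~\ref{prop:pow2}, set the edge length from a level-$k$ node to its children to be $2^{k-1}$, so that a cluster rooted at a level-$k$ node covering its whole subtree has radius $2^k - 1$ and cost $2^k$, matching the geometrically scaling ``permit'' costs.

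The core of the argument is an adaptive adversary that mimics the deterministic $\PP$ lower bound of Meyerson (Theorem~\ref{thm:mey:lb:det}) but exploits ternary branching. I would proceed level by level, from the root down. Maintaining a current ``active'' subtree, the adversary repeatedly issues a demand at a leaf inside it; whenever $A$ responds by opening a cluster that ``covers at level $k$'' (i.e.\ is rooted at a level-$k$ node or higher relative to the active region), the adversary recurses into a child subtree that is \emph{not} already covered by that cluster — since the branching factor is $3$, after $A$ opens one covering cluster at a given level there remain at least two uncovered children, so the adversary always has room to descend and force $A$ to pay again at the next finer scale. The sequence is designed so that the final set of demands lies within a \emph{single} subtree of diameter $O(2^j)$ for the lowest level $j$ reached, so that OPT pays only $O(2^j)$ by opening one cluster at that scale, while $A$ has been forced to open a cluster of cost $\Theta(2^i)$ at every level $i \le j$ along the descent, for a total of $\Omega(j\cdot 2^j / 2^j) = \Omega(j) = \Omega(K)$ times OPT. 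Making ``does not cover'' precise and ensuring the recursion can always continue for $K$ steps is exactly where ternary (rather than binary) branching is essential.

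In more detail, the recursive step should be: at level $k$ with active node $v_k$, feed demands in one child subtree until $A$ has a cluster covering that child's leaves; look at the coarsest level at which $A$'s current solution ``covers'' the descendant region — if it is $\ge$ the current target level the adversary is content and has extracted cost $\Theta(2^k)$ from $A$ for this level; then pick among the $\ge 2$ remaining children of $v_k$ one whose subtree $A$'s large clusters do not reach, descend, and repeat. One must argue (i) termination after $K$ levels with all issued demands confined to one leaf-subtree of the finest scale, giving $\OPT = O(f + 2^j) = O(2^j)$; (ii) $A$'s total cost $\ge \sum_{i} \Theta(2^i) \ge \Theta(2^j)$ over the coarse levels it was forced to use, but more carefully $\Omega(K)$ separate payments each comparable to $\OPT$, via a telescoping/charging argument that assigns each forced cluster of $A$ to the level it was opened at and bounds how much of $A$'s cost a single optimal cluster can ``explain.'' The number of demands is $n = O(3^K)$, possibly inflated polynomially by the need to issue several demands per level to force $A$'s hand, but in any case $\log n = \Theta(K)$, so the competitive ratio is $\Omega(K) = \Omega(\log n)$.

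The main obstacle I anticipate is the bookkeeping in step (ii): an adversary that only reacts to $A$ opening a new cluster cannot by itself guarantee $A$ keeps paying, since $A$ might open one very large (coarse) cluster early that covers many future demands cheaply. The fix, as in Meyerson's deterministic bound, is that the adversary must \emph{adaptively avoid} whatever coarse clusters $A$ has already opened by descending into an uncovered child — and the crux is showing that with branching factor $3$ this avoidance is sustainable for all $K$ levels simultaneously against all of $A$'s (boundedly many useful) clusters, so that in the end $A$ has paid $\Omega(K)$ times what a single well-placed optimal cluster at the finest reached scale would have cost. Getting the constants and the "uncovered child always exists" invariant exactly right, while keeping $n$ only singly-exponential in $K$, is the delicate part; the rest is the Proposition~\ref{prop:pow2} normalization and a routine $\OPT$ upper bound.
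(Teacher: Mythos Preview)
Your high-level framework is right: a ternary strict HST with $K=\Theta(\log n)$ levels is the correct arena, and the goal is to force $A$'s cost to be $\Omega(K)\cdot C_{\OPT}$. But the specific adversary you describe does not achieve the $\OPT$ bound you claim, and the accounting does not close.

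The gap is in the sentence ``the final set of demands lies within a \emph{single} subtree of diameter $O(2^j)$.'' In your recursion, at node $v_k$ you feed demands into one child subtree $c_1$ and then descend into a \emph{different} child $c_2$. The demands placed in $c_1$'s subtree remain in the instance and must be covered by $\OPT$, but they are disjoint from everything that happens under $c_2$. After $K$ levels, the demands sit in $K$ pairwise-disjoint subtrees of levels $K-1, K-2, \ldots, 0$, and $\OPT$ must pay $\Theta(2^i)$ for each of them, so $C_{\OPT}=\Theta(2^K)$, matching $A$'s cost --- the ratio collapses to $O(1)$. Your arithmetic ``$\Omega(j\cdot 2^j/2^j)$'' implicitly assumes $A$ opens $\Omega(K)$ clusters \emph{each of cost comparable to $\OPT$}, but the clusters you force have geometrically varying costs $\Theta(2^i)$, not a common scale; their sum is $\Theta(2^K)$, not $\Theta(K\cdot 2^j)$. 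The ``avoid $A$'s coarse clusters by moving to a sibling'' idea is sound as a way to keep generating uncovered demands, but it does nothing to keep $\OPT$ small.

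The paper's proof sidesteps this by using a much simpler adversary and pushing all the work into a multi-scale analysis of $\OPT$. The adversary is non-recursive: scan the leaves left to right and place the next demand at the first leaf not yet covered by any of $A$'s clusters. The key idea is then to bound $\OPT$ at \emph{every} level simultaneously: for each $k$, $C_{\OPT}\le c_k\cdot(\text{number of active level-}k\text{ subtrees})$, and the number of active subtrees at level $k$ is controlled via a good/bad dichotomy. A subtree is ``good'' if $A$ eventually opened a cluster large enough to cover it (so good subtrees at level $k$ are at most $\sum_{j\ge k}\ell_j$), and ``bad'' otherwise; the ternary branching gives the recursion $3b_k\le b_{k-1}+\ell_{k-1}$, since a bad level-$k$ subtree has all three children active. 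Summing $C_{\OPT}\le c_k(g_k+b_k)$ over $k=0,\ldots,K$ and using $c_k\le\alpha c_{k-1}$ with $\alpha<3$ yields $(K+1)C_{\OPT}=O(C_A)$. So the ternary branching is used not to let the adversary ``escape'' $A$'s clusters, but to make the bad-subtree recursion contract; that is the missing idea in your plan.
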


\begin{proof}
For simplicity, let us assume that $n$ is an integral power of $3$. For some constant $\alpha \in [2, 3)$, we consider a strict $\alpha$-HST $T$ of height $K = \log_3 n$ whose non-leaf nodes have $3$ children each. The cluster opening cost is $f = 1$.
Let $A$ be any deterministic algorithm. We consider a sequence of demands located at the leaves of $T$. More precisely, starting from the leftmost leaf and advancing towards the rightmost leaf, the next demand in the sequence is located at the next leaf not covered by an open cluster of $A$. Since $T$ has $n$ leaves, $A$ may cover all leaves of $T$ before the arrival of $n$ demands. Then, the demand sequence is completed in an arbitrary way that does not increase the optimal cost.
We let $C_{OPT}$ be the optimal cost, and let $C_{A}$ be the cost of $A$ on this demand sequence.

We let $c_k = 1 + \sum_{\ell=0}^{k-1} \alpha^\ell$ denote the cost of a cluster centered at a level-$k$ node $v_k$ with radius equal to the distance of $v_k$ to the nearest leaf. We observe that for any $k \geq 1$ and any $\alpha \geq 2$, $c_k \leq \alpha c_{k-1}$.
%
%
We classify the clusters opened by $A$ according to their cost. Specifically, we let $L_k$, $0 \leq k \leq K$, be the set of $A$'s clusters with cost in $[c_k, c_{k+1})$, and let $\ell_k = |L_k|$ be the number of such clusters. The key property is that a cluster in $L_k$ can cover the demands of a subtree rooted at level at most $k$, but not higher. Therefore, we can assume that all $A$'s clusters in $L_k$ are centered at a level-$k$ node and have cost equal to $c_k$,
%
%
and obtain a lower bound of $C_A \geq \sum_{k=0}^K \ell_k c_k$ on the algorithm's cost.

%
To derive an upper bound on the optimal cost in terms of $C_A$, we distinguish between good and bad active subtrees, depending on the size of the largest radius cluster with which $A$ covers the demand points in them.
Formally, a subtree $T_k$ rooted at level $k$ is \emph{active} if there is a demand point located at some leaf of it.
For an active subtree $T_k$, we let $C^{\max}_{T_k}$ denote the largest radius cluster opened by $A$ when a new demand point in $T_k$ arrives.
Let $j$, $0 \leq j \leq K$, be such that $C^{\max}_{T_k} \in L_j$. Namely, $C^{\max}_{T_k}$ is centered at a level-$j$ node $v_j$ and covers the entire subtree rooted at $v_j$. If $j \geq k$, i.e. if $C^{\max}_{T_k}$ covers $T_k$ entirely, we say that $T_k$ is a \emph{good} (active) subtree (for the algorithm $A$). If $j < k$, i.e. if $C^{\max}_{T_k}$ does not cover $T_k$ entirely, we say that $T_k$ is a \emph{bad} (active) subtree (for $A$)  (see also Fig.~\ref{fig:trees}).

\begin{figure}[t]
\begin{center}
\includegraphics[scale=1.2]{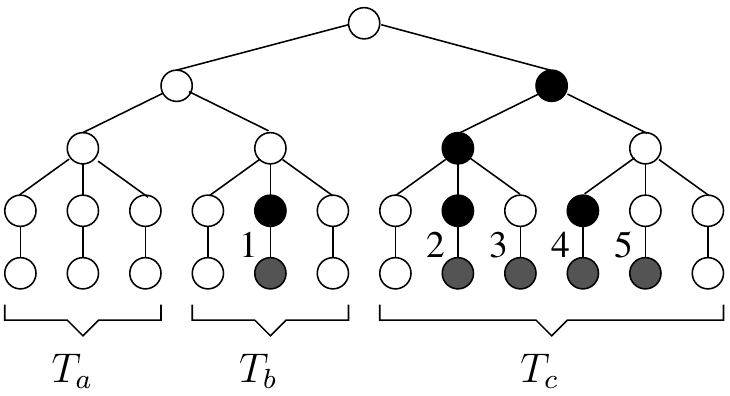}
\end{center}
\caption{\label{fig:trees} An instance of an online algorithm. The subtree $T_{a}$ is not active, since no demand has arrived at its leafs. On the other hand, both subtrees $T_{b}$ and $T_{c}$ are active. Moreover, $T_{b}$ is a bad subtree, since demand 1 has not resulted in opening a cluster that covers $T_{b}$. $T_{c}$ is a good subtree, since demand 5 (the last one) opens a cluster at the root of $T_{c}$, thus covering it.}
\end{figure}

For each $k = 0, \ldots, K$, we let $g_k$ (resp. $b_k$) denote the number of good (resp. bad) active subtrees rooted at level $k$.
To bound $g_k$ from above, we observe that the last demand point of each good active subtree rooted at level $k$ is covered by a new cluster of $A$ rooted at a level $j \geq k$. Therefore, the number of good active subtrees rooted at level $k$ is at most the number of clusters in $\union_{j = k}^K L_j$. Formally, for each level $k \geq 0$, $g_k \leq \sum_{j = k}^K \ell_j$.
To bound $b_k$ from above, we first observe that each active leaf / demand point is a good active level-$0$ subtree, and thus $b_0 = 0$. For each level $k \geq 1$, we observe that if $T_k$ is a bad subtree, then by the definition of the demand sequence, the $3$ subtrees rooted at the children of $T_k$'s root are all active. Moreover, each of these subtrees is either a bad subtree rooted at level $k-1$, in which case it is counted in $b_{k-1}$, or a good subtree covered by a cluster in $L_{k-1}$, in which case it is counted in $\ell_{k-1}$. Therefore, for each level $k \geq 1$, $3b_k \leq b_{k-1} + \ell_{k-1}$.

Using these bounds on $g_k$ and $b_k$, we can bound from above the optimal cost in terms of $C_A$. To this end, the crucial observation is that we can obtain a feasible solution by opening a cluster of cost $c_k$ centered at the root of every active subtree rooted at level $k$. Since the number of active subtrees rooted at level $k$ is $b_k+g_k$, we obtain that for every $k \geq 0$, $C_{OPT} \leq c_k(b_k + g_k)$. Using the upper bound on $g_k$ and summing up for $k = 0, \dots, K$, we have that
\(
 (K+1) C_{OPT} \leq
 \sum_{k=0}^K c_k b_k + \sum_{k=0}^K c_k \sum_{j = k}^K \ell_j
\)\,.

%
Using that $c_k \leq \alpha^k$ and that $c_k \leq \alpha c_{k-1}$, which hold for all $\alpha \geq 2$, we bound the second term by:
\[
 \sum_{k=0}^K c_k \sum_{j = k}^K \ell_j =
 \sum_{k=0}^K \ell_k \sum_{j=0}^k c_j
 \leq \sum_{k=0}^K \ell_k \sum_{j=0}^k \alpha^j
 \leq \sum_{k=0}^K \ell_k c_{k+1}
 \leq \alpha \sum_{k=0}^K \ell_k c_k
 \leq \alpha C_A
\]
%
To bound the first term, we use that for every level $k \geq 1$, $3b_k \leq b_{k-1}+\ell_{k-1}$ and $c_k \leq \alpha c_{k-1}$. Therefore, $(3/\alpha) b_k c_k \leq (b_{k-1} + \ell_{k-1}) c_{k-1}$. Summing up for $k =1, \ldots, K$, we have that:
\[
 \frac{3}{\alpha} \sum_{k=1}^K  b_k c_k \leq
 \sum_{k=1}^K b_{k-1} c_{k-1} + \sum_{k=1}^K \ell_{k-1} c_{k-1}
\]
Using that $b_0 = 0$ and that $\alpha < 3$, we obtain that:
\[
 \frac{3}{\alpha} \sum_{k=0}^K  b_k c_k \leq
 \sum_{k=0}^{K-1} b_k c_k + \sum_{k=0}^{K-1} \ell_k c_k \leq
 \sum_{k=0}^{K} b_k c_k + C_A
 \ \ \ \Rightarrow\ \ \
 \sum_{k=0}^K  b_k c_k \leq \frac{\alpha}{3-\alpha}\,C_A
\]

Putting everything together, we conclude that for any $\alpha \in [2, 3)$,
$(K+1) C_{OPT} \leq (\alpha + \frac{\alpha}{3-\alpha}) C_{A}$.
Since $K = \log_3 n$, this implies a lower bound of $\Omega(\log n)$ on the deterministic competitive ratio of $\OSRC$ in tree metrics.
 \qed \end{proof}

Notably, the $\OSRC$ instance constructed in the proof of Theorem~\ref{thm:det_LB} satisfies the conditions of Lemma~\ref{lem:equiv_rev}. Moreover, since the demands in the proof of Theorem~\ref{thm:det_LB} appear from left to right, the $\PP$ algorithm used in the proof of Lemma~\ref{lem:equiv_rev} does not need to be time-sequence-independent. Therefore, the lower bound of Theorem~\ref{thm:det_LB} holds even for the subclass of $\OSRC$ instances that are reducible to the interval version of $\PP$ by the competitive-ratio-preserving transformation of Lemma~\ref{lem:equiv_rev}.

\subsection{A Lower Bound for Deterministic Online Sum-Radii Clustering on the Plane}

Motivated by the fact that the deterministic competitive ratio of $\OSRC$ on the line is constant \cite{CEIL13}, we study $\OSRC$ in the Euclidean plane. The following theorem uses a constant-distortion planar embedding of a ternary strict $\alpha$-HST, and establishes a lower bound of $\Omega(\log n)$ on the deterministic competitive ratio of $\OSRC$ on the Euclidean plane.

\begin{theorem}\label{thm:det_LB_plane}
The competitive ratio of any deterministic algorithm for Online Sum-Radii Clustering on the Euclidean plane is $\Omega(\log n)$, where $n$ is the number of demands.
\end{theorem}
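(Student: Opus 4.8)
The plan is to embed the ternary strict $\alpha$-HST underlying Theorem~\ref{thm:det_LB} into the Euclidean plane with $O(1)$ distortion, and then transfer that lower bound by turning any online algorithm for the plane into a ``shadow'' online algorithm for the HST of comparable cost. Fix a constant $\alpha\in(1+2/\sqrt3,\,3)$, say $\alpha=5/2$, let $T$ be the ternary strict $\alpha$-HST of height $K=\log_3 n$ with tree metric $d$, and write $\rho_j=\sum_{\ell=0}^{j-1}\alpha^\ell$ for the distance from a level-$j$ node to the leaves below it (so $c_j=1+\rho_j$, as in Theorem~\ref{thm:det_LB}). I would embed $T$ into $\reals^2$ top-down: place the root arbitrarily and, having placed a level-$j$ node $v$, place its three children at the vertices of an equilateral triangle of circumradius $\alpha^{j-1}$ centered at the image of $v$, then recurse into three pairwise disjoint disks of radius $\rho_{j-1}$ around the children. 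An easy induction shows the image of each level-$j$ subtree lies inside the disk of radius $\rho_j$ around the image of its root; and since $\alpha>1+2/\sqrt3$, the triangle's sides have length $\sqrt3\,\alpha^{j-1}\ge 2\rho_{j-1}$, so the three recursive disks are disjoint and in fact separated by $\Omega(\alpha^{j-1})$. Hence there is a constant $\beta\ge1$ (an explicit number, roughly $8$; its exact value is immaterial) with $\tfrac1\beta d(x,y)\le\|\phi(x)-\phi(y)\|\le d(x,y)$ for all leaves $x,y$. Equivalently, for any leaf $v$ and $D\ge0$, every leaf $w$ with $d(v,w)\le D$ lies in the level-$k$ subtree rooted at the ancestor $a_k(v)$ of $v$, where $k$ is the integer with $\rho_k\le D<\rho_{k+1}$ (taking $k=K$ when $D\ge\rho_K$).

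Given a deterministic online algorithm $A$ for $\OSRC$ on the plane, I would define its \emph{shadow} $A'$, an online algorithm for $\OSRC$ on $T$ with opening cost $f=1$: whenever $A$ opens a ball $B(c,r)$ to serve a newly arrived demand located at $\phi(v)$, let $A'$ open the cluster $C(a_k(v),\rho_k)$ with $\rho_k\le2\beta r<\rho_{k+1}$. Three facts are then routine. (i)~Any later demand that $A$ assigns to $B(c,r)$, say one at $\phi(w)$, satisfies $\|\phi(w)-\phi(v)\|\le2r$, hence $d(v,w)\le2\beta r<\rho_{k+1}$, so $w$ lies in $A'$'s cluster; thus $A'$ is a legal online algorithm, and every leaf whose image is covered by a ball of $A$ is covered by a cluster of $A'$. (ii)~Each shadow cluster costs $1+\rho_k\le1+2\beta r\le2\beta(f+r)$, so $\mathrm{cost}(A')\le2\beta\,\mathrm{cost}(A)$. (iii)~Every shadow cluster has the ``normalized'' form --- centered at a level-$k$ node, radius $\rho_k$, cost $c_k$ --- assumed throughout the proof of Theorem~\ref{thm:det_LB}.

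To finish, I would run against $A$ the adversary of Theorem~\ref{thm:det_LB}, but phrased through the shadow: at each step place the next demand at $\phi(v)$, where $v$ is the leftmost leaf of $T$ not yet covered by a cluster of $A'$, padding to $n=3^K$ demands as in that proof. By~(i), such a $v$ is uncovered by every ball of $A$, so every demand forces $A$ --- and hence $A'$ --- to open a new cluster; moreover the tree instance seen by $A'$ is exactly the one produced by the adversary of Theorem~\ref{thm:det_LB} against $A'$. That theorem's analysis therefore applies verbatim to $A'$ and gives $\mathrm{cost}(A')\ge\frac{K+1}{\alpha+\alpha/(3-\alpha)}\,\OPT_T$, where $\OPT_T$ is the optimum of the tree instance. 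Since each tree cluster $C(v_k,\rho_k)$ maps via $\phi$ to the ball $B(\phi(v_k),\rho_k)$, of the same cost $c_k$ and covering the entire image of the level-$k$ subtree, every tree solution yields a plane solution of equal cost, so $\OPT_T\ge\OPT_{\mathrm{pl}}$, the optimum of the plane instance. Combining, $\mathrm{cost}(A)\ge\frac1{2\beta}\mathrm{cost}(A')\ge\frac{K+1}{2\beta(\alpha+\alpha/(3-\alpha))}\,\OPT_{\mathrm{pl}}=\Omega(\log n)\cdot\OPT_{\mathrm{pl}}$, since $\alpha,\beta$ are absolute constants and $K=\log_3 n$.

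The main obstacle I expect is getting the embedding to do precisely what the reduction needs: the packing argument must at once keep sibling-subtree images disjoint with a constant-fraction gap --- which forces $\alpha>1+2/\sqrt3$ --- and confine each level-$j$ subtree's image to a disk of radius $\Theta(\alpha^j)$, while the estimate in Theorem~\ref{thm:det_LB} needs $\alpha<3$; verifying that these constraints are jointly satisfiable (they are, e.g.\ $\alpha=5/2$) and extracting the distortion constant $\beta$ is where the real work lies. A secondary point is the coupling: the plane adversary must be defined in terms of the shadow clusters of $A'$ rather than the balls of $A$, so that $A'$ faces exactly the demand sequence analysed in Theorem~\ref{thm:det_LB}.
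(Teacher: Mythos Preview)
Your proposal is correct and takes essentially the same approach as the paper: embed the ternary strict $\alpha$-HST of Theorem~\ref{thm:det_LB} into the plane with constant distortion, turn any planar online algorithm into a tree algorithm of comparable cost, and invoke Theorem~\ref{thm:det_LB}. The differences are cosmetic---the paper uses an axis-aligned embedding (children along the $x$- and $y$-axes, valid for $\alpha\in(2,3)$, distortion $\sqrt{2}\,\alpha/(\alpha-2)$) rather than your equilateral-triangle embedding, and centers its shadow clusters at the demand leaf with radius $2D_\alpha r$ (deferring the normalization to level-$k$ ancestors to the proof of Theorem~\ref{thm:det_LB}) rather than building that normalization into the shadow construction as you do.
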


\begin{proof}
The idea is to use a planar embedding of a ternary strict $\alpha$-HST $T$ with distortion $D_\alpha \leq \sqrt{2} \,\alpha / (\alpha - 2)$, and show that a $c$-competitive algorithm for $\OSRC$ on the plane implies a $2 c D_\alpha$-competitive algorithm for $\OSRC$ in strict $\alpha$-HSTs.

To this end, we first show that a constant-distortion planar embedding of a ternary strict $\alpha$-HST $T$ implies the theorem. Specifically, let $\alpha \in (2, 3)$ be any constant, and let $D_\alpha$ be the distortion of an embedding $e$ that maps each node $v$ of $T$ to a point $e(v)$ in the plane. Namely, for every pair of nodes $u, v$ of $T$, we have that $d_T(u, v)/ D_\alpha \leq d_P(e(u), e(v)) \leq d_T(u, v)$, where $d_T(u, v)$ (resp. $d_P(u, v)$) denotes the distance of $u$ and $v$ in $T$ (resp. in the Euclidean plane). Assuming the embedding $e$ and a $c$-competitive deterministic algorithm $A$ for $\OSRC$ on the plane, we describe a $2 c D_\alpha$-competitive algorithm $A'$ for $T$.

For any demand point $u$ in $T$, we present the algorithm $A$ with a demand located at $e(u)$. If $A$ covers $e(u)$ by opening a new cluster $C(v, r)$, the algorithm $A'$ opens a new cluster $C(u, 2 D_\alpha r)$. Then, for every node $z$ of $T$ for which $e(z)$ is covered by $C(v, r)$, $z$ is covered by the corresponding cluster $C(u, 2 D_\alpha r)$ of $A'$. This holds because $d_P(e(u), e(v)) \leq 2r$ and the distortion of $e$ is $D_\alpha$. If $e(u)$ is covered by an existing cluster of $A$, the previous observation implies that $u$ is covered by the corresponding cluster of $A'$.

Since for any demand points $u$, $u'$, the distance of $e(u)$ and $e(u')$ in the plane is no greater than their distance in $T$, the optimal cost of the instance presented to $A$ is no greater than the optimal cost of the instance presented to $A'$. Also, the cost of each cluster of $A'$ is at most $2 D_\alpha$ times the cost of the corresponding cluster of $A$. Therefore, the competitive ratio of $A'$ is at most $2 D_\alpha c$. Since $D_\alpha$ is a constant and, by Theorem~\ref{thm:det_LB}, the competitive ratio of $A'$ is $\Omega(\log n)$, the competitive ratio of $A$ is $\Omega(\log n)$ as well.

\insertfig{\textwidth}{fig/embedding}%
{\label{fig:embedding}An example of the embedding used in the proof of Theorem~\ref{thm:det_LB_plane}. The nodes and the edges connecting them depict the structure of a ternary strict $\alpha$-HST with $4$ levels and $\alpha \approx 2.5$. The locations of the nodes correspond to the locations in the plane to which they are mapped by the embedding.}

To conclude the proof, we describe a $D_\alpha$-distortion embedding of a ternary strict $\alpha$-HST $T$ with $K+1$ levels in the Euclidean plane. The root of $T$ is mapped to the point $(0,0)$. The children of the root are mapped to the points $(-\alpha^{K}, 0), (0, \alpha^{K}), (\alpha^{K}, 0)$. For each  level-$k$ node $v_k$, $k = K, \ldots, 1$, whose parent is located along the $x$-axis on the left (resp. on the right), its children are mapped to the $3$ points at distance $\alpha^{k-1}$ to $v_k$ located along the $x$-axis on the right (resp. on the left) and along the $y$-axis up and down. For each  level-$k$ node $v_k$, $k = K, \ldots, 1$, whose parent is located down along the $y$-axis, its children are mapped to the $3$ points at distance $\alpha^{k-1}$ to $v_k$ located up along the $y$-axis and left and right along the $x$-axis (see also Fig.~\ref{fig:embedding}).

We proceed to show that the distortion of this embedding is at most $\sqrt{2} \,\alpha / (\alpha - 2)$. We first observe that for any two nodes $u$, $v$ of $T$, $d_P(e(v), e(u)) \leq d_T(u, v)$, i.e., the distance of $u$ and $v$ in $T$ is no less than the distance of their images $e(u)$ and $e(v)$ in the plane. Moreover, due to the self-similarity of the embedding, the maximum distortion occurs for pairs of leaves of $T$ mapped to points in the plane that lie at symmetric locations with respect to the line $y = x$ (or to the line $y = -x$) and are closest to it (e.g., such are the pairs of leaves/points $21$ and $24$, $22$ and $23$, $31$ and $32$, and $30$ and $33$ in Fig.~\ref{fig:embedding}).
The distance of any such a pair of leaves $u$, $v$ in $T$ is $d_T(u, v) = 2(\alpha^{K+1}-1)/(\alpha -1)$. On the other hand, the distance of their images $e(u)$, $e(v)$ in the Euclidean plane is:
\[
  d_P(e(u), e(v)) =
   \sqrt{2}\left(\alpha^K - \frac{\alpha^K - 1}{\alpha - 1}\right) =
   \sqrt{2}\,\frac{\alpha^{K+1} - 2\alpha^K + 1}{\alpha - 1}
\]
Therefore, the maximum distortion of the embedding is:
\[ D_\alpha = \frac{2(\alpha^{K+1} - 1)}
                   {\sqrt{2}(\alpha^{K+1} - 2\alpha^K + 1)}
           \leq \frac{\sqrt{2}\,\alpha}{\alpha - 2}\,, \]
where the inequality holds for all $\alpha \in (2, 3)$.
 \qed\end{proof}

\section{An Asymptotically Optimal Online Algorithm}
\label{sec:det_alg}

In this section, we present a deterministic primal-dual algorithm for $\OSRC$ in a general metric space $(M, d)$. In the following, we assume that the optimal solution only consists of clusters with radius $2^k f$, where $k$ is a non-negative integer (see also Proposition~\ref{prop:pow2}). For simplicity, we let $r_k = 2^k f$, if $k \geq 0$, and $r_k = 0$, if $k = -1$. Let $N = \nats \union \{ -1 \}$. Then, the following are a Linear Programming relaxation of $\OSRC$ and its dual:

\begin{minipage}{0.47\textwidth}\begin{align*}
\min & \sum_{(z, k) \in M \times N} x_{zk} (f+r_k)\\
\mbox{s.t.} & \sum_{(z, k) : d(u_j, z) \leq r_k} x_{zk} \geq 1
& & \forall\,u_j \\
 & \hskip5mm x_{zk} \geq 0 & & \forall\,(z, k)
\end{align*}\end{minipage}\hfill%
\begin{minipage}{0.47\textwidth}\begin{align*}
\max & \hskip6mm \sum_{j=1}^n a_j\\
\mbox{s.t.} & \sum_{j: d(u_j, z) \leq r_k} a_j \leq f+r_k
& & \forall\,(z, k)\\
 & \hskip5mm a_j \geq 0 & & \forall\,u_j
\end{align*}\end{minipage}

\medskip In the primal program, there is a variable $x_{zk}$ for each point $z$ and each $k \in N$ that indicates the extent to which cluster $C(z, r_k)$ is open. The constraints require that each demand $u_j$ is fractionally covered. If we require that $x_{zk} \in \{0, 1\}$ for all $z$, $k$, we obtain an Integer Programming formulation of $\OSRC$. In the dual, there is a variable $a_{j}$ for each demand $u_{j}$, and the constraints require that no potential cluster is ``overpaid''.

The algorithm we present below maintains at all times a pair of feasible solutions for the primal and dual programs that correspond to the structure that has been revealed. When a new demand arrives, the algorithm has to update the primal variables such that the new demand is covered and further increment the dual variables, but without violating the capacity constraints. The algorithm must also guarantee that the cost of the primal and dual solutions will be close enough, since the gap between these two will determine the competitive ratio.

\smallskip\noindent{\bf The Algorithm.}
The primal-dual algorithm, or $\PDC$ in short, maintains a collection of clusters that cover all the demands processed so far.
The collection of clusters of $\PDC$ is initially empty. When a new demand $u_j$, $j = 1, \ldots, n$, arrives, if $u_j$ is covered by an already open cluster $C$, $\PDC$ assigns $u_j$ to $C$ and sets $u_j$'s dual variable $a_j$ to $0$.
Otherwise, $\PDC$ sets $a_j$ to $f$. This makes the dual constraint corresponding to $(u_j, -1)$ and possibly some other dual constraints tight. $\PDC$ finds the maximum $k \in N$ such that for some point $z \in M$, the dual constraint corresponding to $(z, k)$ becomes tight due to $a_j$. Then, $\PDC$ opens a new cluster $C(z, 3 r_k)$ and assigns $u_j$ to it.

\smallskip\noindent{\bf Competitive Analysis.}
The main result of this section is that:

\begin{theorem} \label{thm:det-opt}
The competitive ratio of $\PDC$ is at most $3\,(2+\log_2 n)$.
\end{theorem}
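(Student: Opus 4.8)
The plan is to bound the cost of $\PDC$ against the dual objective $\sum_j a_j$, which by weak LP duality is a lower bound on $C_{OPT}$. The argument has two parts: first, a per-step accounting showing that the primal cost charged when a new cluster $C(z,3r_k)$ opens is at most a constant times the amount of dual that was ``newly made tight'' by the demand triggering that opening; second, a global argument that the total dual raised across the whole sequence is at most $O(\log n)$ times $C_{OPT}$, because a single optimal cluster can absorb dual from at most $O(\log n)$ of our opened clusters.

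For the per-step bound, consider a demand $u_j$ that forces an opening. We set $a_j = f$, pick the maximal $k\in N$ for which some constraint $(z,k)$ becomes tight, and open $C(z,3r_k)$ of cost $f+3r_k$. The key inequality to prove is $f + 3r_k \le 3\cdot\big(\text{dual mass inside } C(z,r_k) \text{ at the moment of opening}\big)$. Since the $(z,k)$ constraint is tight, $\sum_{i:d(u_i,z)\le r_k} a_i = f+r_k$, and $f+3r_k \le 3(f+r_k)$ for $k\ge -1$; for $k=-1$ (i.e.\ $r_k=0$) the cluster has cost $f$ and the $(u_j,-1)$ constraint gives dual mass exactly $f$, so the bound is immediate. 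Thus each opened cluster's cost is at most $3$ times the dual contained in its ``core'' ball $C(z,r_k)$.

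The global step is the crux and the main obstacle. I would fix an optimal cluster $C^*$ of radius $r_k^*=2^{k^*}f$ (using Proposition~\ref{prop:pow2}, at the cost of a factor $2$) and show that the total dual raised by demands lying in $C^*$ is at most $O(\log n)$ times the cost $f+r_k^*$ of $C^*$. Each such demand $u_j$ contributes $a_j\le f$. The subtlety is that there may be up to $n$ demands in $C^*$, so I cannot simply sum $a_j \le f$ over them. Instead I partition the demands in $C^*$ by the ``scale'' $k$ at which they triggered an opening (or contributed zero if covered). The maximality of $k$ in the algorithm, together with the tightness it certifies, should force the dual variables of demands in $C^*$ at any single scale to sum to $O(f+r_k^*)$: roughly, once one demand at scale $\ge k^*$ near $C^*$'s center opens a large cluster of radius $3r_k\ge 3r_{k^*}$ that swallows $C^*$, no further demands in $C^*$ arrive. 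And for scales $k<k^*$, a counting/tightness argument bounds the number of distinct tight constraints that can be ``discovered'' before $C^*$ is engulfed. Since $k$ ranges over $N$ but effectively only $O(\log n)$ scales can be relevant for a fixed instance (radii beyond $\mathrm{diam}(D)$ are useless, and $2^k f$ exceeds the optimal cost once $k$ is large), we get the $O(\log n)$ loss. Combining, $C_A \le 3\sum_j a_j$ is not quite the right direction; rather $\sum_j a_j$ bounds $C_{OPT}$ from below while $C_A \le 3\sum_j a_j$ would give a $3$-approximation, which is too strong — so the $\log n$ must enter via the fact that the dual solution $\PDC$ maintains is \emph{not} feasible scaled up, or more precisely that $C_A \le 3\sum_j a_j$ but $\sum_j a_j \le O(\log n)\, C_{OPT}$ fails and instead one shows $C_A \le 3\sum_j a_j$ and separately that the honest dual $\sum a_j$ can be as large as $O(\log n) C_{OPT}$; the clean route is to show directly $C_A \le 3(2+\log_2 n) C_{OPT}$ by charging each opened cluster's cost to the optimal cluster covering its triggering demand and arguing each optimal cluster is charged $O(\log n)$ times, once per scale.

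Concretely, I would (i) establish the per-opening bound $f+3r_k \le 3(f+r_k)$ and note the cost of $\PDC$ is $3$ times the total dual raised; (ii) fix an optimal solution with power-of-two radii; (iii) for each optimal cluster $C(p^*,r^*)$, show that among $\PDC$'s openings triggered by demands inside $C(p^*,r^*)$, at most one can occur at each scale $k$ — because the first opening at scale $k\ge\log_2 r^*$ (roughly) produces a cluster of radius $3r_k$ that contains all of $C(p^*,r^*)$ by the triangle inequality, preventing later demands there from triggering anything, while for $k<\log_2 r^*$ a tightness argument caps the count — so the dual raised by demands in $C(p^*,r^*)$ is $O(\log n)(f+r^*)$; (iv) sum over optimal clusters and multiply by the factor $2$ from Proposition~\ref{prop:pow2} and the factor $3$ from step (i), and verify the constants give $3(2+\log_2 n)$. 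The delicate point throughout is handling the maximality of $k$ and the precise geometry of why radius $3r_k$ (rather than $2r_k$) suffices to engulf a containing optimal cluster and thereby shut down further openings; getting the constant exactly $3$ and the additive $2$ in $3(2+\log_2 n)$ right is where the care lies.
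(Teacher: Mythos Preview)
Your proposal has the right ingredients but assembles them in the wrong order, and the ``clean route'' you settle on does not work.

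The paper's proof is exactly the two-step primal--dual argument you begin with, but the $\log n$ enters in the step you abandon. It proves two lemmas. First (dual feasibility): once a constraint $(z,k)$ becomes tight, the cluster $C(z',3r_{k'})$ opened at that moment (with $k'\ge k$) covers every point of $C(z,r_k)$, so no later demand in $C(z,r_k)$ receives positive dual; hence $\sum_j a_j \le C_{OPT}$. Second (cost against dual): each opening $C(z,3r_k)$ costs at most $3(f+r_k)=3\sum_{u_j\in C(z,r_k)}a_j$, and the key counting claim is that \emph{for each fixed scale $k$}, any $u_j$ with $a_j>0$ lies in the core ball $C(z,r_k)$ of at most one opened cluster of that scale (if it lay in two, the first opened $C(z_1,3r_k)$ would already cover all of $C(z_2,r_k)$ by the triangle inequality, so the second could never open). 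Since only the scales $k=-1,0,\dots,\lfloor\log_2 n\rfloor$ can ever become tight, each $a_j$ is charged at most $2+\log_2 n$ times, giving $C_A\le 3(2+\log_2 n)\sum_j a_j\le 3(2+\log_2 n)\,C_{OPT}$.

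Your confusion is in locating the $\log n$. You write that ``$C_A\le 3\sum_j a_j$ would give a $3$-approximation, which is too strong'', and then try to fix this by loosening the bound on $\sum_j a_j$ to $O(\log n)\,C_{OPT}$. But $\sum_j a_j\le C_{OPT}$ holds exactly (once dual feasibility is shown, which you never address); the inequality $C_A\le 3\sum_j a_j$ is the one that is \emph{false}, precisely because the per-opening bound double-counts each $a_j$ across scales. The $\log n$ repairs that double-counting, not the dual bound.

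Your alternative charging scheme --- charge each opening to the optimal cluster containing its triggering demand, and argue each optimal cluster is charged once per scale --- fails at small scales. For an optimal cluster $C(p^*,2^{k^*}f)$, two demands $u_j,u_{j'}$ inside it can both trigger openings at the same scale $k<k^*$: the cluster $C(z,3r_k)$ opened by $u_j$ need not contain $u_{j'}$, since $d(u_{j'},z)\le 2r^*+r_k$ can exceed $3r_k$. Indeed, up to $2^{k^*}$ openings at scales below $k^*$ can be triggered from inside $C(p^*,r^*)$ before the constraint $(p^*,k^*)$ goes tight. Moreover, the last opening may be at a scale $k'\gg k^*$, with cost far exceeding $f+r^*$, so there is no natural way to charge it to $C(p^*,r^*)$ alone. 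The per-scale uniqueness argument works when stated for \emph{dual variables} (each $a_j$ in at most one core ball per scale), not for \emph{optimal clusters}.
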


The analysis of the competitive ratio consists of Lemma~\ref{lem:feasible_dual} and Lemma~\ref{lem:online_cost} below. Lemma~\ref{lem:feasible_dual} shows that the dual solution maintained by $\PDC$ is feasible. Thus, the optimal cost for any demand sequence is at least the value of the dual solution maintained by $\PDC$.

\begin{lemma} \label{lem:feasible_dual}
For any sequence $u_1, \ldots, u_n$ of demand points, the dual solution $a_1, \ldots, a_n$ maintained by $\PDC$ satisfies all the dual constraints.
\end{lemma}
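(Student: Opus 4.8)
The plan is to show that no dual constraint is ever violated by maintaining the invariant that, after processing each demand, every dual constraint $\sum_{j: d(u_j,z)\le r_k} a_j \le f+r_k$ holds. The dual variables are only set once, when their demand arrives, and are never decreased, so it suffices to check that the step which sets $a_j$ (either to $0$ or to $f$) preserves feasibility. If $a_j$ is set to $0$, no constraint changes, so there is nothing to prove. The interesting case is when $u_j$ is \emph{not} covered by an open cluster and $\PDC$ sets $a_j=f$; here I must show that after this increase every constraint still holds.

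First I would argue that before $u_j$ arrives, for \emph{every} pair $(z,k)$ with $d(u_j,z)\le r_k$, the left-hand side $\sum_{i<j:\,d(u_i,z)\le r_k} a_i$ is in fact $0$, so that after setting $a_j=f$ the left-hand side of the $(z,k)$-constraint becomes exactly $f \le f+r_k$. This is the heart of the argument. To see it, suppose some earlier demand $u_i$ with $a_i>0$ satisfies $d(u_i,z)\le r_k$ for some $(z,k)$ that also contains $u_j$ (i.e.\ $d(u_j,z)\le r_k$). When $u_i$ was processed with $a_i=f$, the algorithm found the \emph{maximum} $k'$ for which some constraint $(z',k')$ became tight because of $a_i$; since the constraint $(z,k)$ would have become tight at that moment as well (its left-hand side was raised to at least $f = f+r_{-1}$; more carefully, one needs $r_k \le$ the amount that made it tight), we have $k' \ge k$, and $\PDC$ opened the cluster $C(z', 3r_{k'})$. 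I would then use the triangle inequality to show $u_j \in C(z', 3r_{k'})$: indeed $d(u_j, z') \le d(u_j, z) + d(z, u_i) + d(u_i, z') \le r_k + r_k + r_{k'} \le 3 r_{k'}$, using $k \le k'$ and the fact that a tight constraint at $(z', k')$ forces $d(u_i,z')\le r_{k'}$. Hence $u_j$ would already be covered by an open cluster, contradicting the assumption that $\PDC$ opens a new cluster for $u_j$.

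The one subtlety to nail down is the precise statement ``the constraint $(z,k)$ was already tight when $u_i$ was processed,'' since tightness is defined relative to the value $f+r_k$ and the contribution of $a_i=f$ alone; I would formalize this by noting that, by the same inductive argument applied to $u_i$, all dual variables $a_\ell$ with $\ell<i$ and $d(u_\ell,z)\le r_k$ were zero, so after setting $a_i=f$ the $(z,k)$-constraint has left-hand side exactly $f$. If $r_k \le$ the gap, i.e.\ always, this may not be \emph{tight} for $k\ge 1$, so the cleaner formulation is: the algorithm picks the maximum $k'$ such that \emph{some} $(z',k')$-constraint becomes tight, and $k' \ge -1$ always (the $(u_i,-1)$ constraint, of value $f+r_{-1}=f$, becomes tight). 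The real claim I need is that the cluster opened, $C(z',3r_{k'})$, has radius large enough to also swallow $u_j$; this follows because $d(u_i, z) \le r_k$, $d(u_i,u_j)\le d(u_i,z)+d(z,u_j)\le 2 r_k$, and $d(u_i, z')\le r_{k'}$ from tightness at $(z',k')$, giving $d(u_j, z') \le 2r_k + r_{k'} \le 3 r_{k'}$ provided $r_k \le r_{k'}$, i.e.\ $k \le k'$. So the crux reduces to: \emph{whenever an earlier positive dual variable $a_i$ contributes to a constraint $(z,k)$ that also covers $u_j$, the index $k$ is at most the index $k'$ of the cluster opened for $u_i$.} I expect this monotonicity step --- matching the ``maximum $k$'' choice in the algorithm against the index of an arbitrary shared constraint --- to be the main obstacle, and I would handle it by a careful induction on the demand index, carrying the invariant ``every dual constraint satisfied, and moreover the contributions to any $(z,k)$-constraint come from at most one positive dual variable.''
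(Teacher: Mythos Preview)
Your proposal has a genuine gap: the invariant you try to maintain is false, and the monotonicity $k\le k'$ you need does not hold for the ``first contributing demand'' $u_i$ you select.

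Concretely, your central claim is that before an uncovered demand $u_j$ arrives, every constraint $(z,k)$ with $d(u_j,z)\le r_k$ has left-hand side $0$; equivalently, that any $(z,k)$-constraint receives contribution from at most one positive dual variable. This is not true. Take two demands $u_1\neq u_2$ at distance $\le f$ from some point $z$. When $u_1$ arrives, $a_1=f$ and only constraints with right-hand side $f$ (i.e.\ $k=-1$) become tight, so $\PDC$ opens the zero-radius cluster $C(u_1,0)$; in particular $k'=-1$. When $u_2$ arrives it is uncovered, yet the constraint $(z,0)$ already has left-hand side $a_1=f>0$. Moreover, $u_1$ and $u_2$ both lie in the ball $C(z,r_0)$ with $k=0>k'=-1$, so the inequality $k\le k'$ you rely on fails, and your triangle-inequality bound $2r_k+r_{k'}\le 3r_{k'}$ breaks down. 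The ``one positive contribution per constraint'' invariant is simply wrong: in this example $(z,0)$ ends up with two.

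What you are missing is the paper's key granularity observation: every dual variable is $0$ or $f$ and every right-hand side $f+r_k$ is an integer multiple of $f$, so a constraint can never jump from satisfied to violated without first being \emph{exactly tight} at some step. The paper therefore focuses on the demand $u_j$ at which a given constraint $(z,k)$ \emph{becomes tight} (not the first demand contributing to it). At that moment, by definition of the algorithm's choice of the maximum tight level, $k'\ge k$ holds automatically, and then your triangle-inequality computation $d(u,z')\le d(u,u_j)+d(u_j,z')\le 2r_k+r_{k'}\le 3r_{k'}$ goes through verbatim to show that every subsequent demand in $C(z,r_k)$ is already covered. So the geometric part of your argument is right; you just need to apply it at the correct moment.
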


\begin{proof}
Let an arbitrary demand sequence $u_1, \ldots, u_n$. In the dual solution maintained by $\PDC$, each variable $a_j$ is either $0$ or $f$. Since the righthand-side of any constraint is a multiple of $f$, no constraint can be violated without first becoming tight. To prove the lemma, we show that after a constraint becomes tight, its lefthand-side does not increase, and thus the constraint will never be violated.

We call a cluster $C(z, r_k)$ \emph{tight} if the dual constraint corresponding to $(z, k)$ is satisfied with equality. We next prove that as soon as a cluster $C(z, r_k)$ becomes tight, each subsequent demand $u \in C(z, r_k)$ is covered by some open cluster of $\PDC$, and thus the corresponding dual variable is set to $0$.
To this end, let us consider some cluster $C(z, r_k)$ that becomes tight when a demand $u_j$ is processed. Then, $d(u_j, z) \leq r_k$. To cover $u_j$, $\PDC$ opens a new cluster $C' = C(z', 3 r_{k'})$. The algorithm ensures that $k' \geq k$ (and thus $r_{k'} \geq r_k$) and that $d(u_j, z') \leq r_{k'}$.
Now let $u$ be any subsequent demand in $C(z, r_k)$. Since
\[ d(u, z') \leq d(u, u_j) + d(u_j, z')
            \leq 2 r_k + r_{k'}
            \leq 3 r_{k'}\,,
\]
$u$ is covered by $C'$. The first inequality above holds because the metric space satisfies the triangle inequality; the second holds because both $u$ and $u_j$ belong to $C(z, r_k)$. Finally, the third inequality follows from $r_{k'} \geq r_k$.
 \qed\end{proof}

We proceed to show that the total cost of $\PDC$ is at most $O(\log n)$ times the value of its dual solution, which in turn is at most the total cost of the optimal solution.

\begin{lemma} \label{lem:online_cost}
For any sequence $u_1, \ldots, u_n$ of demand points, the total cost of $\PDC$ is at most $3\,(2+\log_2 n) \sum_{j=1}^n a_j$\,.
\end{lemma}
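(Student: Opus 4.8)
The plan is to charge the cost of every cluster opened by $\PDC$ to the dual variables, while keeping careful track of how many times a single dual unit $a_j = f$ can be charged. First I would observe that the total cost of $\PDC$ splits as a sum over the demands $u_j$ that trigger a new cluster: when such a $u_j$ arrives, $\PDC$ sets $a_j = f$, identifies the largest $k \in N$ for which some constraint $(z,k)$ becomes tight because of this increment, and opens $C(z, 3r_k)$ at cost $f + 3r_k \le 4 \cdot 3^{-1}\cdot\ldots$—more precisely $f + 3r_k \le 3(f + r_k)$ (since $f \le 3f$ and $3r_k \le 3r_k$), and $f + r_k$ is exactly the right-hand side of the tight constraint $(z,k)$. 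So the cost of this cluster is at most $3\sum_{i : d(u_i,z)\le r_k} a_i$, i.e.\ three times the sum of the dual variables of the demands lying inside the ball that just became tight.

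The heart of the argument is then to bound, for a fixed demand $u_j$ with $a_j = f$, the total number of clusters whose cost gets charged to $a_j$ in this way. I would argue that $a_j$ is charged at most once per ``radius level'' $k$: if a cluster opened at some later step is charged to $a_j$ through a tight constraint $(z,k)$ with $u_j \in C(z,r_k)$, then by the mechanics of $\PDC$ and by the covering argument already used in Lemma~\ref{lem:feasible_dual}, once a ball $C(z,r_k)$ containing $u_j$ has become tight, the new cluster $C(z',3r_{k'})$ with $k' \ge k$ opened at that moment covers every subsequent demand in $C(z,r_k)$ — in particular, $u_j$'s neighborhood at scale $r_k$ is henceforth absorbed, so no \emph{future} tight constraint at level $k$ can involve a ball still needing to be covered and containing $u_j$. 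Hence each of the levels $k \in \{-1, 0, 1, \ldots, \lceil \log_2 n\rceil\}$ contributes at most one charge to $a_j$. Since the optimal (and hence any) cluster has radius at most $\diam(D) \le$ roughly $n f$ — more carefully, radii larger than $\sim nf$ are never needed because a single cluster of radius $(n-1)\cdot(\text{max pairwise distance})$ is dominated, and by Proposition~\ref{prop:pow2} scaling we only care about levels $k$ with $r_k = 2^k f$ up to $k = O(\log n)$ — the number of relevant levels is at most $2 + \log_2 n$. Combining, each $a_j$ is charged at most $2 + \log_2 n$ times, each charge costing a factor $3$, so $C_{\PDC} \le 3(2 + \log_2 n)\sum_{j=1}^n a_j$.

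The main obstacle I anticipate is making the ``at most one charge per level'' claim fully rigorous: one must show not merely that $C(z,r_k)$ is covered after it goes tight (which Lemma~\ref{lem:feasible_dual} gives), but that this prevents $a_j$ from being re-charged \emph{at the same level $k$} later — the subtlety is that a future tight ball $C(z'',k)$ at the same level could contain $u_j$ while being centered at a different point $z'' \ne z$. I would handle this by noting that $C(z'', r_k) \subseteq C(z, 3r_k)$-type containments force any later level-$k$ tight ball through $u_j$ to already be covered by the cluster $C(z', 3r_{k'})$ opened earlier (using $r_{k'} \ge r_k$ and the triangle inequality, exactly as in the displayed inequality $d(u,z') \le 2r_k + r_{k'} \le 3r_{k'}$ of Lemma~\ref{lem:feasible_dual}), so the later demand triggering that ball would have been covered and would not have set its dual variable — a contradiction with the ball being freshly tightened by a nonzero increment. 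The remaining steps (the per-cluster cost bound $f + 3r_k \le 3(f+r_k)$ and the counting of $O(\log n)$ levels via the trivial upper bound on the optimal radius) are routine.
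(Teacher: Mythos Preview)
Your proposal is essentially the paper's own argument: charge each opened cluster $C(z,3r_k)$ to the duals in the tight ball $C(z,r_k)$ at a factor $3$, and show each $a_j>0$ is charged at most once per level $k$ via the triangle-inequality containment $C(z'',r_k)\subseteq C(z,3r_k)$ whenever $u_j\in C(z,r_k)\cap C(z'',r_k)$. Two small clean-ups: first, in the ``one charge per level'' step you should drop the $k'\ge k$ language---when $a_j$ is \emph{charged at level $k$} the opened cluster is exactly $C(z,3r_k)$ (since $k$ is by definition the maximum tight level at that step), so $z'=z$ and $k'=k$, and the containment argument then goes through verbatim as you wrote it in your final paragraph; second, your bound on the number of levels should not appeal to the diameter of the optimum but simply to the fact that each $a_j\le f$, so the left-hand side of any dual constraint is at most $nf$, forcing $(1+2^k)f\le nf$ and hence $k\le\log_2 n$.
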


\begin{proof}
We observe that for any integer $k > \log_2 n$ and for all points $z$, a cluster $C(z, k)$ cannot become tight, because the lefthand-side of any dual constraint is at most $n f$. Therefore, we can restrict our attention to at most $2+\log_2 n$ values of $k$.

Next, we show that for all $k = -1, 0, \ldots, \floor{\log_2 n}$, each demand $u_j$ with $a_j > 0$ contributes to the opening cost of at most one cluster with radius $3 r_k$. Namely, $\PDC$ opens at most one cluster $C(z, 3 r_k)$ for which $u_j$ belongs to the tight cluster $C(z, r_k)$.
We prove this claim by contradiction. Let us assume that for some value of $k$, $\PDC$ opens two clusters $C_1 = C(z_1, 3 r_k)$ and $C_2 = C(z_2, 3 r_k)$ for which there is a demand $u_j$ with $a_j > 0$ that belongs to both $C(z_1, r_k)$ and $C(z_2, r_k)$.
Since $\PDC$ opens at most one new cluster when a new demand is processed, one of the clusters $C_1$, $C_2$ opens before the other. So, let us assume that $C_1$ opens before $C_2$.
This means that $\PDC$ opened $C_1$ in response to a demand $u_{j'}$, with $j' \leq j$, that was uncovered at its arrival time and made $C(z_1, r_k)$ tight. Then, any subsequent demand $u \in C(z_2, r_k)$ is covered by $C_1$, because:
\[ d(u, z_1) \leq d(u, u_j) + d(u_j, z_1)
             \leq 2 r_k + r_k
              = 3 r_k
\]
The second inequality above holds because both $u$ and $u_j$ belong to $C(z_2, r_k)$ and $u_j$ also belongs to $C(z_1, r_k)$. Therefore, after $C_1$ opens, there are no uncovered demands in $C(z_2, r_k)$ that can force $\PDC$ to open $C_2$, a contradiction.

To conclude the proof of the lemma, we observe that when $\PDC$ opens a new cluster $C(z, 3 r_k)$, the cluster $C(z, r_k)$ is tight. Hence, the total cost of $C(z, 3 r_k)$ is at most $3 \sum_{u_j \in C(z, r_k)} a_j$. Therefore, the total cost of $\PDC$ is at most:
\begin{align*}
 \sum_{(z, k): C(z, 3 r_k) \text{\,opens}}\sum_{u_j \in C(z, r_k)} 3 a_j & =
 3 \sum_{j=1}^n a_j
 \left|\{ (z, k) : C(z, 3 r_k) \text{\ opens\ and\ } u_j \in C(z, r_k) \}\right| \\
 & \leq 3\,(2+\log_2 n) \sum_{j=1}^n a_j
\end{align*}
The inequality holds because for each $k = -1, 0, \ldots, \floor{\log_2 n}$ and each $u_j$ with $a_j > 0$, there is at most one pair $(z, k)$  such that $C(z, 3 r_k)$ opens and $u_j \in C(z, r_k)$.
 \qed\end{proof}

\section{A Randomized Online Algorithm}
\label{sec:randomized}

In this section, we present a simple randomized algorithm, or $\RC$ in short, of logarithmic competitiveness. $\RC$ is \emph{memoryless}, in the sense that it keeps in memory only its solution, namely the centers and the radii of its clusters. For simplicity, we assume that $n$ is an integral power of $2$ and known to the algorithm in advance. This assumption can be removed by standard techniques, similar to those discussed in the Appendix.
When a new demand $u_j$ arrives, if $u_j$ is covered by an already open cluster $C$, $\RC$ assigns $u_j$ to $C$.
Otherwise, for each $k = 0, \ldots, \log_2 n, 1+\log_2 n$, $\RC$ opens a new cluster $C(u_j, 2^k f)$ with probability $2^{-k}$, and assigns $u_j$ to the cluster $C(u_j, f)$, which opens with probability $1$.

\begin{lemma} \label{lem:simple-random}
The competitive ratio of $\RC$ is at most $2\,(5+\log_2 n)$.
\end{lemma}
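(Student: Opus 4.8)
The plan is to bound the expected cost of $\RC$ by relating it, cluster by cluster, to the cost of an optimal solution $\OPT$ that (by Proposition~\ref{prop:pow2}) uses only clusters of radius $2^k f$. Consider an optimal cluster $C^*=C(p,2^k f)$ of cost $(1+2^k)f$ that is responsible for covering some subset $D^*$ of the demands. I would show that the expected cost incurred by $\RC$ while processing the demands of $D^*$ is $O((k+\log_2 n)\,(1+2^k)f)$; summing over the $O(\OPT/f)$ optimal clusters and noting $2^k f \le \OPT$ for each of them then yields the claimed $O((\log_2 n)\,\OPT)$ bound, i.e. competitive ratio $2(5+\log_2 n)$ after tracking constants.

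\textbf{Key steps.} First, fix an optimal cluster $C^*=C(p,2^k f)$ and observe that all demands in $D^*$ lie within pairwise distance $2^{k+1}f$, so a single $\RC$-cluster of radius $2^{k+1}f$ centered at \emph{any} demand of $D^*$ covers all of $D^*$. Second, enumerate the demands $u_{j_1},u_{j_2},\dots$ of $D^*$ in arrival order and note that $\RC$ opens fresh clusters (and thus pays) on account of $u_{j_i}$ only if $u_{j_i}$ is not already covered when it arrives; in particular, once any demand of $D^*$ has caused $\RC$ to open a cluster of radius $\ge 2^{k+1}f$, no later demand of $D^*$ costs anything. Third, for the ``level-$\ell$'' contribution: each arriving uncovered demand of $D^*$ independently opens a radius-$2^\ell f$ cluster with probability $2^{-\ell}$ at cost $(1+2^\ell)f$, and such a cluster, once opened by $u_{j_i}$, covers every demand of $D^*$ at distance $\le 2^\ell f$ from $u_{j_i}$ — in particular for $\ell \ge k+1$ it covers all of $D^*$. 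So the number of times $\RC$ pays at level $\ell\le k$ on account of $D^*$ is at most the number of uncovered arrivals, which I bound by a geometric/coupon-style argument; and the number of times it pays at level $\ell\ge k+1$ is at most one in expectation after accounting for the probability $2^{-\ell}$ that such a cluster opens per arrival, times the (at most $n$) arrivals, giving expected level-$\ell$ cost $O((1+2^\ell)f\cdot n\cdot 2^{-\ell}) = O(f + 2^{-\ell}nf)$ — wait, that is too crude, so instead I would argue that the expected number of uncovered arrivals of $D^*$ is itself $O(2^{k+1})$ (since each uncovered arrival opens a covering cluster with probability $\ge 2^{-(k+1)}$), and then each such arrival contributes expected cost $\sum_{\ell=0}^{1+\log_2 n} 2^{-\ell}(1+2^\ell)f = O((\log_2 n)f)$. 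Multiplying, the total expected $\RC$-cost charged to $C^*$ is $O(2^{k+1}\cdot(\log_2 n)f) = O((\log_2 n)\cdot 2^k f)$, which is $O((\log_2 n))$ times the cost of $C^*$.

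\textbf{Main obstacle.} The delicate point is step three: controlling the expected number of uncovered arrivals of $D^*$ and making sure the per-arrival expected cost is genuinely $O((\log_2 n)f)$ rather than something larger, while handling the coupling between ``a demand is uncovered'' (which depends on the whole past random history, including clusters opened by demands \emph{outside} $D^*$) and ``this demand opens a new cluster.'' The clean way around this is to ignore clusters opened by demands outside $D^*$ — they only help — and to use that each uncovered arrival of $D^*$ independently opens a $D^*$-covering cluster with probability at least $2^{-(k+1)}$, so the number of uncovered arrivals is stochastically dominated by a geometric random variable with mean $2^{k+1}$; this decouples the two events cleanly. I would also need the boundary bookkeeping for $\ell = 1+\log_2 n$ (the extra level that opens with probability $2^{-(1+\log_2 n)} = 1/(2n)$) to absorb the constant, and a short argument that the very first uncovered demand $u_j$ of $D^*$ always pays the base cost $2f$ for $C(u_j,f)$, which is absorbed into the $2^{k+1}$-factor since $2^k f$ is part of $\OPT$.
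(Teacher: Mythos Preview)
Your proposal is correct and follows essentially the same route as the paper: fix an optimal cluster $C(p,2^k f)$, observe that any cluster of radius $2^{k+1}f$ centered at a demand in it covers it entirely, bound the expected number of uncovered arrivals by $2^{k+1}$ (geometric), bound the expected cost per uncovered arrival by $\sum_{\ell=0}^{1+\log_2 n} 2^{-\ell}(1+2^\ell)f \le (4+\log_2 n)f$, and multiply. The paper carries out exactly this computation.

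The one place where the paper is more explicit is your ``main obstacle.'' You say stochastic domination by a geometric ``decouples the two events cleanly,'' but domination alone does not let you multiply $\mathbb{E}[N]$ by the per-arrival expected cost, since the stopping index and the per-arrival cost share randomness (the Bernoulli at level $k{+}1$ appears in both). The paper handles this by introducing i.i.d.\ variables $Y_i$ that upper-bound the actual per-arrival costs $X_i$ (taking a fresh sample when $u_i$ is already covered), noting that the stopping index $T$ is a stopping time for the filtration generated by the algorithm's coin flips, and invoking Wald's identity to get $\mathbb{E}\bigl[\sum_{i\le T} Y_i\bigr]=\mathbb{E}[Y]\cdot\mathbb{E}[T]=2^{k+1}(4+\log_2 n)f$. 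Your argument becomes rigorous once you insert this step; everything else matches.
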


\begin{proof}
We recall the assumption that the optimal solution only consists of clusters of radius $2^k f$, where $k$ is a non-negative integer.
To establish the competitive ratio, we consider each optimal cluster $C(p, 2^k f)$ of total cost $(2^{k}+1)f$, $k \leq \log_2 n$, and bound the expected cost of the algorithm until it opens a cluster that covers the entire cluster $C(p, 2^k f)$.

Let $u_{1}, u_{2}, \ldots, u_{T}$ be the subsequence of demands included in $C(p, 2^k f)$, such that the cluster opened by $u_{T}$ covers the entire cluster $C(p, 2^k f)$. We note that $T$ itself is a random variable. For each demand $u_{i}$, we let $X_{i}$ be the random variable for the cost of the clusters that $u_{i}$ opens. Hence, the total algorithm's cost for $u_{1}, u_{2}, \ldots, u_{T}$ is $X = \sum_{i=1}^{T} X_{i}$. For each demand $u_i$, $X_{i}$ is 0 if $u_{i}$ is covered upon arrival. Otherwise, $X_{i}$ follows the distribution in the description of $\RC$.
Let $Y_{i}$ be a new random variable such that $Y_{i}=X_{i}$ if $u_{i}$ is not covered, else $Y_{i}$ takes a value as if $u_{i}$ was not covered at its arrival time. Clearly, for each $i$, $X_{i} \leq Y_{i}$. Thus, the expected cost of $\RC$ until it opens a cluster covering the entire cluster $C(p, 2^k f)$ is:
\[
 \mathbb{E}\left[\sum_{i=1}^{T} X_{i}\right] \leq
 \mathbb{E}\left[\sum_{i=1}^{T} Y_{i}\right]
\]
We observe that $Y_{i}$ are nonnegative, independent and identically distributed random variables, and that $T$ is a stopping time. Hence, by Wald's equation we have that
$\mathbb{E}[\sum_{i=1}^{T} Y_{i}] = \mathbb{E}[Y] \cdot \mathbb{E}[T]$, where $Y$ denotes the (identical) distribution of $Y_1, \ldots, Y_T$.

$\mathbb{E}[T]$ denotes the expected number of demands in $C(p, 2^k f)$ that have arrived before the first of them opens a new cluster of radius $2^{k+1} f$ that includes the entire cluster $C(p, 2^k f)$. Hence, $\mathbb{E}[T] = 2^{k+1}$. Moreover, we have that:
\[
 \mathbb{E}[Y] = \sum_{i=0}^{1+\log_2 n} \frac{1}{2^{i}}(2^{i}+1) f \leq (4+\log_2 n) f
\]
Taking also into account the cost of $(2^{k+1}+1)f$ for the cluster of radius $2^{k+1} f$ opened by $u_T$, the expected cost of the algorithm for the demands in $C(p, 2^k f)$ is at most $(2^{k+1} (4+\log_2 n) + 2^{k+1} + 1)f$, which is at most $2\,(5+\log_2 n)$ times the optimal cost for $C(p, 2^k f)$. Since this holds for all optimal clusters, the competitive ratio of $\RC$ is at most $2\,(5+\log_2 n)$.
 \qed\end{proof}

\section{A Fractional Online Algorithm}
\label{sec:fractional}

We conclude with a deterministic $O(\log\log n)$-competitive algorithm for the fractional version of $\OSRC$ in general metric spaces.
The fractional algorithm is based on the primal-dual approach of \cite{AAABN03,AAABN04}, and is a generalization of the online algorithm for the fractional version of $\PP$ in \cite[Section~4.1]{Mey05}.

A fractional algorithm maintains, in an online fashion, a feasible solution to the Linear Programming relaxation of $\OSRC$. In the notation of Section~\ref{sec:det_alg}, for each point-type pair $(z, k)$, the algorithm maintains a fraction $x_{zk}$, which denotes the extent to which the cluster $C(z, r_k)$ opens, and can only increase as new demands arrive. For each demand $u_j$, the fractions of the clusters covering $u_j$ must sum up to at least $1$, i.e.  $\sum_{(z, k): u_j \in C(z, r_k)} x_{zk} \geq 1$. The total cost of the fractional solution maintained by the algorithm is $\sum_{(z, k)} x_{zk} (f+r_k)$. The competitive ratio is the worst-case ratio of the algorithm's cost to the cost of an offline optimal integral solution for the same demand sequence.

\smallskip\noindent{\bf The Algorithm.}
For the fractional algorithm, or $\FC$ in short, we assume that $n$ is an integral power of $2$ and known in advance. In the Appendix, we show how to remove these assumptions, by losing a constant factor in the competitive ratio.

$\FC$ considers only $K+1$ different types of clusters, where $K = \log_2 n$. For each $k = 1, \ldots, K+1$, we let $c_k = f+r_k$ denote the cost of a cluster $C(p, r_k)$ of type $k$. The algorithm considers only the demand locations as potential cluster centers. For convenience, for each demand $u_j$ and for each $k$, we let $x_{jk}$ be the extent to which the cluster $C(u_j, r_k)$ is open, with the understanding that $x_{jk} = 0$ before $u_j$ arrives. Similarly, we let $F_{jk} = \sum_{(i,k): u_j \in C(u_i, r_k)} x_{ik}$ be the extent to which demand $u_j$ is covered by clusters of type $k$, and let $F_j = \sum_{k} F_{jk}$ be the extent to which $u_j$ is covered.

When a new demand $u_j$, $j = 1, \ldots, n$, arrives, if $F_j \geq 1$, $u_j$ is already covered. Otherwise, while $F_j < 1$, $\FC$ performs the following operation:
\begin{enumerate}
\item For every $k = 1, \ldots K+1$,
      $x_{jk} \leftarrow x_{jk} + \frac{1}{c_k (K+1)}$

\item For every $k = 1, \ldots, K+1$ and every demand $u_i \in C(u_j, r_{k})$, $x_{ik} \leftarrow x_{ik} (1+ \frac{1}{c_k})$
\end{enumerate}

\noindent{\bf Competitive Analysis.}
$\FC$ maintains a (fractional) feasible solution in an online fashion. The proof of the following theorem extends the competitive analysis in \cite[Section~4.1]{Mey05}.

\begin{theorem}\label{thm:fractional}
The competitive ratio of $\FC$ is $O(\log \log n)$.
\end{theorem}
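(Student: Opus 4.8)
The plan is to follow the two-part primal--dual template used for Parking Permit in \cite[Section~4.1]{Mey05}: (i) show that the fractional solution produced by $\FC$ is feasible after each demand is processed, and (ii) bound the total cost of $\FC$ against the value of a feasible dual solution, which lower-bounds the optimal integral cost. Feasibility is immediate from the algorithm's description, since the inner \texttt{while} loop runs precisely until $F_j \geq 1$, and the fractions $x_{ik}$ only increase; one just needs to check that a single iteration raises $F_j$ by a bounded amount (at most $O(1)$, from Step~1 alone), so the loop terminates and does not overshoot by more than a constant factor. Thus each demand $u_j$ ends up covered to an extent in $[1, O(1)]$.

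For the cost bound, I would charge the increase in $\sum_{(z,k)} x_{zk}\, c_k$ during the processing of $u_j$ to a suitable dual-type quantity associated with $u_j$. In each iteration of the \texttt{while} loop, Step~1 increases the primal cost by $\sum_{k=1}^{K+1} \frac{c_k}{c_k(K+1)} = 1$, and Step~2 increases it by $\sum_{k=1}^{K+1} \frac{1}{c_k} \sum_{u_i \in C(u_j, r_k)} x_{ik}\, c_k = \sum_{k=1}^{K+1} F_{jk} < 1$ since $F_j < 1$ throughout the loop; so each iteration costs $O(1)$. Hence it suffices to bound the number of iterations of the \texttt{while} loop, summed over all demands. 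The key point, exactly as in \cite{Mey05}, is the multiplicative update: if $\OPT$ covers $u_j$ by an integral cluster $C(p, r_k)$ (so $u_j \in C(p,r_k)$ and, by the triangle inequality, every demand $u_i$ with $d(u_j,u_i)\le r_k$ that $\OPT$ covers this way lies in a bounded neighborhood), then each iteration in which $u_j$ is uncovered multiplies $x_{pk}$ (or the relevant center's fraction) by $(1+1/c_k)$, starting from an initial value of at least $\frac{1}{c_k(K+1)}$. After at most $O(c_k \log(c_k (K+1)))$ iterations this fraction would exceed $1$, which is impossible once $\OPT$'s cluster already covers $u_j$. Summing the per-iteration cost $O(1)$ over these $O(c_k \log(c_k(K+1)))$ iterations and over the optimal clusters gives a total of $O\!\big(\sum_{\text{opt clusters}} c_k \log(c_k(K+1))\big)$; since $c_k = f + 2^k f \le (n+1)f$ and $K+1 = 1+\log_2 n$, we have $\log(c_k(K+1)) = O(\log n + \log\log n) $, which is too weak---so the charging must instead be done \emph{level by level}, associating to each optimal cluster of type $k$ a charge of $O(c_k \cdot \log(K+1)) = O(c_k \log\log n)$, using that the relevant fraction reaches a constant (not $1$) after $O(c_k \log(K+1))$ iterations because each iteration with $u_j$ uncovered increases the \emph{total} coverage $F_j$ by the contribution of that one cluster's fraction growth, and $F_j$ stays below $1$. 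This yields total cost $O(\log\log n) \cdot \OPT$.

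More concretely, I would set up, for each demand $u_j$ and each $k$, a dual variable $a_{jk}$ incremented by $\frac{1}{K+1}$ in each iteration of the \texttt{while} loop for $u_j$, so that $\sum_{j,k} a_{jk}$ equals (up to the per-iteration cost $O(1)$) the cost of $\FC$, and then show $\sum_{j,k} a_{jk} = O(\log\log n)\cdot\OPT$ by proving that for any fixed optimal cluster $C(p, r_k)$ the sum of increments $a_{jk}$ over demands $u_j \in C(p, r_k)$ is $O(c_k \log(K+1)/(K+1))$: the mechanism is that each such increment corresponds to multiplying $x_{pk}$ by $(1+1/c_k)$ via Step~2 (since $u_j \in C(p, r_k)$ means $u_j$ is in $C(u_p', r_k)$ for a nearby center, after the standard reduction identifying demand-centered and optimal-centered clusters up to a constant radius blow-up), and $x_{pk}$ can grow from $\ge \frac{1}{c_k(K+1)}$ to at most $1$, allowing at most $\log_{1+1/c_k}(c_k(K+1)) = O(c_k \log(c_k(K+1)))$ multiplications---and here I would use the sharper bound that the algorithm only needs $x_{pk}$ to reach $\Theta(1/(K+1))$ before $F_j$ is boosted enough, cutting the $\log c_k$ factor and leaving $O(c_k \log(K+1))$. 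Dividing by $K+1$ and summing over the $\OPT$ clusters gives $\sum_{j,k} a_{jk} = O(\log(K+1))\cdot \OPT = O(\log\log n)\cdot\OPT$.

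The main obstacle I anticipate is the same subtlety that makes Meyerson's fractional Parking Permit analysis delicate: getting the dependence down to $\log\log n$ rather than $\log n$ requires arguing that a multiplicatively-increased fraction only has to climb to roughly $1/(K+1)$ (so its "useful life" is $O(c_k\log K)$ iterations, not $O(c_k \log(c_k K))$), which hinges on correctly accounting for how the $K+1$ different cluster types \emph{share} the burden of covering $u_j$ and on the exact interplay between Step~1's additive seeding (ensuring every fraction is nonzero and at least $\frac{1}{c_k(K+1)}$ as soon as it matters) and Step~2's multiplicative growth. Handling the "demand-centered versus optimal-centered" identification cleanly---so that membership $u_j \in C(p, r_k)$ in $\OPT$ translates to $u_j$ lying in some demand-centered cluster of comparable radius that $\FC$ actually updates---will also need the triangle-inequality argument and a constant-factor radius increase, exactly in the style of Proposition~\ref{prop:pow2} and Lemma~\ref{lem:feasible_dual}. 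The termination/feasibility part and the per-iteration $O(1)$ cost accounting are routine by comparison.
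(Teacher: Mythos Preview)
Your per-iteration accounting is correct and matches the paper: Step~1 adds exactly $1$ to the primal cost, Step~2 adds $\sum_k F_{jk}=F_j<1$, so each \texttt{while} iteration costs at most $2$. The proof thus reduces, as you say, to bounding the number of iterations triggered by demands inside each optimal cluster $C(p,r_k)$.

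The gap is in how you bound that number. You propose to track a single fraction ``$x_{pk}$'' (or the fraction at some nearby demand) and argue it is multiplied by $(1+1/c_k)$ in every such iteration. This does not work. First, $p$ need not be a demand location, so $x_{pk}$ is not even a variable of the algorithm. Second, and more importantly, if you pick any representative demand $u_{i_0}\in C(p,r_k)$, Step~2 only multiplies $x_{i_0 k}$ when the current demand $u_j$ satisfies $u_{i_0}\in C(u_j,r_k)$, i.e.\ $d(u_{i_0},u_j)\le r_k$. But two demands in $C(p,r_k)$ can be as far apart as $2r_k$, so there is no single level-$k$ fraction that is updated in \emph{every} iteration caused by a demand in $C(p,r_k)$. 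Your ``constant-factor radius blow-up'' remark gestures at the fix but does not carry it out at level~$k$.

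The paper's remedy is to go up one level and track a \emph{sum} rather than a single fraction. Define
\[
  F_{p(k+1)}\;=\;\sum_{u_j\in C(p,r_k)} x_{j(k+1)}.
\]
Because $r_{k+1}=2r_k$, any two demands $u_i,u_j\in C(p,r_k)$ satisfy $d(u_i,u_j)\le 2r_k=r_{k+1}$, so $C(u_j,r_{k+1})\supseteq C(p,r_k)$ for every such $u_j$. Hence $F_{j(k+1)}\ge F_{p(k+1)}$ for every demand $u_j\in C(p,r_k)$, and once $F_{p(k+1)}\ge 1$ no further iterations occur for such demands. Moreover, in every iteration caused by some $u_j\in C(p,r_k)$: Step~1 adds $\frac{1}{c_{k+1}(K+1)}$ to $F_{p(k+1)}$, and Step~2 multiplies \emph{every} term $x_{i(k+1)}$ with $u_i\in C(p,r_k)$ by $(1+1/c_{k+1})$ (since $u_i\in C(u_j,r_{k+1})$), hence multiplies $F_{p(k+1)}$ by $(1+1/c_{k+1})$. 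So after the first $c_{k+1}$ iterations $F_{p(k+1)}\ge 1/(K+1)$, and after $O(c_{k+1}\log K)$ further iterations it reaches~$1$. With $c_{k+1}\le 2c_k$ and $K=\log_2 n$, this gives cost $O(c_k\log\log n)$ per optimal cluster, as required. The level shift $k\to k+1$ is exactly what makes the ``every iteration updates the tracked quantity'' claim true; your dual-variable bookkeeping can be made to work once you replace $x_{pk}$ by $F_{p(k+1)}$.
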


\begin{proof}
We first consider a single operation performed when a demand $u_j$ arrives, and show that it increases the fractional cost by at most $2$. Since an operation is performed, $F_j < 1$. The first step of the operation increases the fractional cost by $1/(K+1)$ for each cluster type. Hence, the total increase in the fractional cost is $1$. The second step of the operation increases the fractional cost by:
\[  \sum_{(i, k) : u_i \in C(u_j, r_{k})} x_{ik} =
    \sum_{(i, k) : u_j \in C(u_i, r_k)} x_{ik} =
    \sum_{k=1}^{K+1} F_{jk} =
    F_j < 1
\]
Therefore, each operation increases the fractional cost by at most $2$.

We next show that the number of operations performed by $\FC$ for the demands in an optimal cluster $C(p, r_k)$ of cost $c_k$ is $O(c_{k+1} \log K)$. We let
\(  F_{p(k+1)} =
   \sum_{j : u_j \in C(p, r_k)} x_{j(k+1)} \).
Since for any demand $u_j \in C(p, r_k)$, $C(u_j, r_{k+1})$ includes the entire cluster $C(p, r_k)$, we have that $F_{j(k+1)} \geq F_{p(k+1)}$. Hence, as soon as $F_{p(k+1)} \geq 1$, every subsequent demand $u_j \in C(p, r_k)$ has $F_j \geq 1$ at its arrival time, and $\FC$ does not perform any operations due to $u_j$. Consequently, the total cost of $\FC$ for the demands in $C(p, r_k)$ can be bounded by the total increase in the fractional cost due to operations caused by demands in $C(p, r_k)$ arriving as long as $F_{p(k+1)} < 1$.

To bound the number of such operations, we observe that after the first $c_{k+1}$ operations caused by demands in $C(p, r_k)$, $F_{p(k+1)}$ becomes at least $1/(K+1)$, due to the first step of these operations. For each subsequent operation caused by a demand in $C(p, r_k)$, all fractions $x_{j(k+1)}$, with $u_j \in C(p, r_k)$, increase by factor of $(1+\frac{1}{c_{k+1}})$. Therefore, $F_{p(k+1)}$ increases by a factor of $(1+\frac{1}{c_{k+1}})$. After
$O(c_{k+1} \log K)$ such increases, $F_{p(k+1)}$ becomes at least $1$, and $\FC$ does not perform any additional operations due to demands in $C(p, r_k)$ arriving afterwards.

Therefore, the total fractional cost of $\FC$ for the demands in an optimal cluster $C(p, r_k)$ of cost $c_k$ is $O(c_{k+1} \log K)$. Then, the theorem follows from $c_{k+1} \leq 2 c_k$ and $K = \log_2 n$.
 \qed\end{proof}

\section{Conclusions and Open Problems}

In this work, we study the problem of Online Sum-Radii Clustering, a natural relaxation of the online version of Sum-$k$-Radii Clustering. We prove that the deterministic competitive ratio of Online Sum-Radii Clustering for general metric spaces is $\Theta(\log n)$, where the lower bound is valid even for relatively simple metric spaces, such as the Euclidean plane and metrics induced by ternary HSTs. Interestingly, we prove that Online Sum-Radii Clustering in HSTs can be regarded as a generalization of Online Parking Permit \cite{Mey05}. Exploiting this result, we obtain a lower bound of $O(\log\log n)$ on the randomized competitive ratio of Online Sum-Radii Clustering in HSTs.

The main remaining open problem is to determine the randomized competitive ratio of Online Sum-Radii Clustering not only in general metric spaces, but also in simple metrics, such as the Euclidean plane and HSTs. In this direction, we present $\FC$, a deterministic $O(\log\log n)$-competitive algorithm for the fractional version of Online Sum-Radii Clustering in general metrics. Our main open question concerns the existence of a randomized rounding procedure that converts, in an online fashion, the fractional solution computed by $\FC$ to an integral clustering of cost within a constant factor of the cost incurred by $\FC$. This would be quite interesting since it would imply that the randomized competitive ratio of Online Sum-Radii Clustering is $\Theta(\log\log n)$. Also, it would be interesting from a technical viewpoint, because known online randomized rounding procedures for covering problems increase the competitive ratio by a logarithmic factor, due to feasibility constraints that have to fulfill with high probability (but they apply to non-metric covering problems, see e.g., \cite{AAABN03,AAABN04}).

\bibliographystyle{plain}
\bibliography{clustering}

\begin{thebibliography}{10}

\bibitem{AAABN04}
N.~Alon, B.~Awerbuch, Y.~Azar, N.~Buchbinder, and J.~Naor.
\newblock {A General Approach to Online Network Optimization Problems}.
\newblock {\em {ACM Transactions on Algorithms}}, 2(4):640--660, 2006.

\bibitem{AAABN03}
N.~Alon, B.~Awerbuch, Y.~Azar, N.~Buchbinder, and J.~Naor.
\newblock {The Online Set Cover Problem}.
\newblock {\em SIAM J. on Computing}, 39(2):361--370, 2009.

\bibitem{BCKK05}
V.~Bil\'o, I.~Caragiannis, C.~Kaklamanis, and P.~Kanellopoulos.
\newblock {Geometric Clustering to Minimize the Sum of Cluster Sizes}.
\newblock In {\em Proc. of the 13th European Symposium on Algorithms
  (ESA~'05)}, volume 3669 of {\em LNCS}, pages 460--471, 2005.

\bibitem{BY98}
A.~Borodin and R.~El-Yaniv.
\newblock {\em {Online Computation and Competitive Analysis}}.
\newblock {Cambridge University Press}, 1998.

\bibitem{CZ09}
T.M. Chan and H.~Zarrabi-Zadeh.
\newblock {A Randomized Algorithm for Online Unit Clustering}.
\newblock {\em Theory of Computing Systems}, 45(3):486--496, 2009.

\bibitem{CCFM97}
M.~Charikar, C.~Chekuri, T.~Feder, and R.~Motwani.
\newblock {Incremental Clustering and Dynamic Information Retrieval}.
\newblock {\em SIAM J. on Computing}, 33(6):1417--1440, 2004.

\bibitem{CP01}
M.~Charikar and R.~Panigrahy.
\newblock {Clustering to Minimize the Sum of Cluster Diameters}.
\newblock {\em J. of Computer and System Sciences}, 68(2):417--441, 2004.

\bibitem{CEIL13}
J. Csirik, L. Epstein, C. Imreh, and A. Levin.
\newblock Online clustering with variable sized clusters.
\newblock {\em Algorithmica}, 65(2):251--274, 2013.

\bibitem{DI11}
G.~Div\'{e}ki and C.~Imreh.
\newblock {An Online $2$-Dimensional Clustering Problem with Variable Sized
  Clusters}.
\newblock {\em {Submitted for publication}}, 2011.

\bibitem{Doddi}
S.~Doddi, M.V. Marathe, S.S. Ravi, D.S. Taylor, and P.~Widmayer.
\newblock {Approximation Algorithms for Clustering to Minimize the Sum of
  Diameters}.
\newblock {\em Nordic J. Computing}, 7(3):185--203, 2000.

\bibitem{EL10}
M.R. Ehmsen and K.S. Larsen.
\newblock {Better Bounds on Online Unit Clustering}.
\newblock In {\em Proc. of the 12th Scandinavian Symposium on Algorithm Theory
  (SWAT '10)}, volume 6139 of {\em LNCS}, pages 371--382, 2010.

\bibitem{ES07}
L.~Epstein and R.~van Stee.
\newblock {On the Online Unit Clustering Problem}.
\newblock {\em ACM Transactions on Algorithms}, 7(1):7, 2010.

\bibitem{Fot08}
D.~Fotakis.
\newblock {On the Competitive Ratio for Online Facility Location}.
\newblock {\em {Algorithmica}}, 50(1):1--57, 2008.

\bibitem{Gib08}
M.~Gibson, G.~Kanade, E.~Krohn, I.A. Pirwani, and K.~Varadarajan.
\newblock {On Clustering to Minimize the Sum of Radii}.
\newblock In {\em Proc. of the 19th {ACM-SIAM} Symposium on Discrete Algorithms
  (SODA~'08)}, pages 819--815, 2008.

\bibitem{Gib10}
M.~Gibson, G.~Kanade, E.~Krohn, I.A. Pirwani, and K.~Varadarajan.
\newblock {On Metric Clustering to Minimize the Sum of Radii}.
\newblock {\em {Algorithmica}}, 57:484--498, 2010.

\bibitem{LP05}
N.~Lev-Tov and D.~Peleg.
\newblock {Polynomial Time Approximation Schemes for Base Station Coverage with
  Minimum Total Radii}.
\newblock {\em Computer Networks}, 47(4):489--501, 2005.

\bibitem{Mey01}
A.~Meyerson.
\newblock {Online Facility Location}.
\newblock In {\em Proc. of the 42nd {IEEE} Symposium on Foundations of Computer
  Science (FOCS~'01)}, pages 426--431, 2001.

\bibitem{Mey05}
A.~Meyerson.
\newblock {The Parking Permit Problem}.
\newblock In {\em Proc. of the 46th {IEEE} Symposium on Foundations of Computer
  Science (FOCS~'05)}, pages 274--284, 2005.

\bibitem{Schaeffer}
S.E. Schaeffer.
\newblock {Graph Clustering}.
\newblock {\em {Computer Science Review}}, 1:27--64, 2007.

\end{thebibliography}

\appendix
\section{Appendix: Online Estimation of the Number of Demands}
\label{sec:app:estimation}

To remove the assumption that $n$ is known to $\FC$ in advance, we run the algorithm in phases, where each phase $\ell$ uses an estimation $n_\ell = 2^{2^{2^\ell}}$ of $n$. Phase $\ell$, $\ell = 1, 2, \ldots$, ends just after $n_\ell$ demands have been processed. Then, the algorithm keeps the fractional solution for the demands processed in phase $\ell$, and starts computing a new fractional solution for the next demands arriving in phase $\ell+1$, using an estimation of $n_{\ell+1}$.

We show that running $\FC$ in phases increases its competitive ratio by no more than a constant factor. Let $\lambda$ be the last phase of $\FC$. By Theorem~\ref{thm:fractional}, the cost of $\FC$ in phase $\ell$, $\ell = 1, \ldots, \lambda$, is at most $2^{\ell} \beta \OPT_\ell$, where $\OPT_\ell$ is the optimal cost for the demands arriving in phase $\ell$, and $\beta$ is the constant hidden in the $O$-notation, in Theorem~\ref{thm:fractional}. Since the optimal cost $\OPT$ for all demands is no less than $\OPT_\ell$, the total cost of $\FC$ is at most $2^{\lambda+1}\beta\OPT$. On the other hand, the total number of demands is at least $2^{2^{2^{\lambda-1}}}$, because the phase $\lambda-1$ is complete, and $\log\log n \geq 2^{\lambda-1}$. Therefore, the total cost of $\FC$ is at most $4\beta 2^{\lambda-1}\OPT$, and the competitive ratio is $O(\log\log n)$.

\end{document}